\def\R{{\mathbb{R}}}
\def\Z{{\mathbb{Z}}}
\def\xx{{\boldsymbol{x}}}
\def\yy{{\boldsymbol{y}}}
\def\zz{{\boldsymbol{z}}}
\def\vv{{\boldsymbol{v}}}
\def\qq{{\boldsymbol{q}}}
\def\ee{{\boldsymbol{e}}}
\def\DD{{\boldsymbol{D}}}
\def\AA{{\boldsymbol{A}}}
\def\XX{{\boldsymbol{X}}}
\def\II{{\boldsymbol{I}}}
\def\zero{{\boldsymbol{0}}}
\def\argmax{\mathop{\rm arg\,max}}
\def\QQ{{\boldsymbol{Q}}}
\def\MM{{\boldsymbol{M}}}
\def\HH{{\boldsymbol{H}}}
\newtheorem{thma}{Theorem}
\newtheorem{corr}[thma]{Corollary}
\newtheorem{lem}[thma]{Lemma}
\newcommand{\bb}{0.199}
\newcommand{\Ce}{\mathbb{C}}
\begin{document}

\title{Compressive Shift Retrieval}

\author{
Henrik Ohlsson,~\IEEEmembership{Member,~IEEE,}  
Yonina C. Eldar,~\IEEEmembership{Fellow,~IEEE},  
Allen Y. Yang,~\IEEEmembership{Senior Member,~IEEE}, \\ and 
S. Shankar Sastry,~\IEEEmembership{Fellow,~IEEE} % <-this % stops a space
\thanks{Ohlsson, Yang and Sastry are with the Dept. of Electrical Engineering and Computer
   Sciences, University of California, Berkeley, CA, USA,
 e-mail: ohlsson@eecs.berkeley.edu.}% <-this % stops a space
\thanks{Ohlsson is also with Dept. of Electrical Engineering, Link\"oping University,
SE-581 83 Link\"oping, Sweden.}% <-this % stops a space
\thanks{Eldar is with the Dept. of Electrical Engineering, Technion --
  Israel Institute of Technology, Haifa 32000, Israel.}
\thanks{Ohlsson is partially supported by the Swedish Research
  Council in the Linnaeus center CADICS, the European Research Council
   under the advanced grant LEARN, contract 267381, by  a postdoctoral grant from the Sweden-America
   Foundation, donated by ASEA's Fellowship Fund, and by a postdoctoral
   grant from the Swedish Research Council. Eldar is supported in part
   by the Israel Science Foundation under Grant no. 170/10, and by the
   Ollendorf Foundation. Yang is supported in part by ARO 63092-MA-II, DARPA FA8650-11-1-7153 and ONR N00014-13-1-0341.}% <-this % stops a space
\thanks{This paper was presented in part at the 38th International
  Conference on Acoustics, Speech, and Signal Processing (ICASSP), May
  26-31, 2013, \cite{Ohlsson13cc}. See also \cite{ohlsson13c}.} 
%\thanks{Manuscript received April 19, 2005; revised January 11,
%2007.}
}

%\markboth{Journal of \LaTeX\ Class Files,~Vol.~6, No.~1, January~2007}%
%{Shell \MakeLowercase{\textit{et al.}}: Bare Demo of IEEEtran.cls for Journals}

\maketitle

\begin{abstract}
The classical shift retrieval problem considers two signals in vector
form that are related by a shift. The problem is of great
importance in  many applications and is typically solved by maximizing the
cross-correlation between the two signals.  Inspired by compressive
sensing, in this paper, we seek to estimate the shift directly from
compressed signals.  We show that under certain conditions, the shift
can be recovered  using fewer  samples and less computation
compared to the classical setup.  
%We also discuss
%how to compress the signals to maximize the likelihood of recovering
%the true shift.  
Of particular interest is shift
estimation  from Fourier  coefficients. We show that
under rather mild conditions only one Fourier  coefficient  suffices
to recover the true  shift.

\end{abstract}

\IEEEpeerreviewmaketitle

\section{Introduction}
\label{sec:intro}

\IEEEPARstart{S}{hift} retrieval is a fundamental
problem in many signal processing
applications. 
For example, to map the ocean floor, an active sonar can be used. The
sonar  transmits certain sound pulse patterns in the water,
and the time it takes to receive the echoes of the pulses indicates
the depth of the ocean floor. 
In target tracking using two acoustic sensors,  
%Another example that can be considered is within the field of target tracking.
%In tracking a vehicle using  two microphones, 
%the extra time a 
%sound wave takes to reach the second microphone is  a
%measure of how much closer the first microphone is to the vehicle. If
%in addition the
%distance between the microphones is known, the direction to the
%vehicle can be estimated using simple trigonometric arguments. 
%This technique to track the direction to a sound source is referred
%to as direction of arrival (DOA).  
%
%Another example that
%can be considered with two acoustic sensors in target tracking.
%Assuming a vehicle is tracked by two microphones, 
%if the distance between the microphones is know, 
the time shift when a 
sound wave of a vehicle reaches the microphones indicates the direction to the
vehicle. % using simple trigonometric arguments. 
In the case of a time shift, the shift retrieval problem is often referred
to as \emph{time delay estimation} (TDE) \cite{CarterG1976}. In computer vision, %the problem of
the spatial shift  relating two images is often sought and 
referred to as
image registration or alignment \cite{LucasB1981,HagerG1998-PAMI,PengY2010}.% is
%essentially a shift retrieval problem where the spatial shift relating
%two images is sought. 

Traditionally, the shift retrieval problem is solved by maximizing the cross-correlation between the two
signals \cite{Hero98}.
In this paper, we revisit
this classical problem, and show how the basic premise
of \textit{compressive sensing} (CS)
\cite{CandesE2005-IT,Candes:06,Donoho:06,EldarY2012} can be used in
the context of shift retrieval. This allows to recover the shift from
compressed data leading to computational and storage savings. 

Compressive sensing is a sampling scheme that makes it possible to sample at the
\emph{information rate} instead of the classical Nyquist rate predicted by the
bandwidth of the signal \cite{Mishali:11}. The majority of the results
in compressive sensing discuss conditions and methods for guaranteed
reconstruction from an under-sampled version of the signal. Therefore,
the information rate is typically referred to as the one that guarantees the recovery of the sparse signal.

However, for many applications such as the aforementioned examples in
shift retrieval, obtaining the signal may not be needed. The goal is to recover some properties or statistics of the unknown signal.
Taking the active sonar for example, one may wonder if it is really necessary to sample at a
rate which is twice that of the bandwidth of the transmitted signal so
 that the received signal can be exactly reconstructed? Clearly the answer is no.
Since the signal itself is not of interest to the application, we might consider an alternative sampling scheme to directly
estimate the shift without first reconstructing the signal. 
These ideas have in fact been recently explored in the context of
radar and ultrasound \cite{Chernyakova13,BarIlan12,Baransky12,Vetterli02} with continuous time signals and multiple shifts.
Here we consider a related problem and ask: \emph{What is the minimal information rate to shift retrieval 
when two related discrete-time signals are under-sampled?}

It turns out that under rather mild conditions, we only need fractions
of the signals. In fact, we will show that 
only one Fourier coefficient from each of the signals suffices
to recover the true  shift. We refer to
the method as \textit{compressive shift retrieval} (CSR).
It should be made clear that CSR does not assume that any of the
involved signals are necessarily sparse.

%\subsection{Contributions}
As the main contribution of the paper, we will  show that when
the sensing matrix  is taken to be a
partial Fourier matrix, under suitable conditions, the true shift can be recovered from both
noise-free and noisy measurements using CSR.  Furthermore,
CSR reduces both the computational load and the number of samples
needed in the process. This is of particular interest since recent developments
 in sampling \cite{Tur:2011,Gedalyahu:11,Baransky:12} have shown that Fourier coefficients can 
be efficiently obtained from space (or
time) measurements by the use of  an appropriate filter and by subsampling the
output. 
Remarkably, our results also show that in some cases sampling as few
as one Fourier
coefficient is enough to
perfectly recover the true shift. 

%Secondly, our analysis on CSR naturally leads to an important design question: %which Fourier coefficients maximize the
%likelihood of recovering the true shift? We will present an
%algorithm for computing the optimal set of Fourier coefficients to
%maximize the likelihood of recovering the shift from noisy
%measurements. 

%Finally, we consider the utility of the CSR method. On one hand, CSR can be of %great practical importance in shift retrieval applications where sampling is %costly. On the other hand, it can also be useful in applications where the cost of %sampling is not an issue, because estimating the shift directly from compressed %signal avoids the need to unpack the signals and therefore may lead to %significant computational savings.

%%%%%%%%%%%%%%%%%%%%%%%%%%%%%%%%%%%%%%%%%%%%%%%%%%%%%%%
\subsection{Prior Work}
\label{sec:prior}

Compressive signal alignment
problems have been addressed in 
only a few publications and, to the authors' best knowledge, not in
the same setup studied in this paper. In 
\cite{KokiopoulouE2009}, the authors considered alignment of images
under random projection. The work was based on the
Johnson-Lindenstrauss property of random projection and proposed an
objective function that can be solved efficiently using
\emph{difference-of-two-convex programming} algorithms. 
In this paper, we instead focus on proving theoretical guarantees of
exact shift recovery when the signal is subsampled by a partial
Fourier basis. The theory developed in  \cite{KokiopoulouE2009}  does
not apply to this setup. 
%A more recent work called \emph{robust alignment via sparse and low-rank %decomposition} (RASL) addresses alignment of an ensemble of
%correlated signals \cite{PengY2010}. The premise of RASL is
%that the ensemble of the signals in vector form can be concatenated in
%a low-rank matrix if the misalignment between them can be
%compensated. However, the method has not provided an analysis of the
%performance in a downsampled regime, and does not deal with the
%pairwise alignment problem, \eg the shift retrieval problem.
The \emph{smashed filter} \cite{davenport2007smashed} is another related technique. It is a general framework for maximum likelihood hypothesis
testing and can be seen as a dimensionally reduced matched filter. It
can therefore be applied to the shift retrieval problem. The
underlying idea of both the smashed filter and CSR are the same in
that both approaches try to avoid reconstructing the signal and
extract the sought descriptor, namely, the shift, from compressive
measurements. However, the analysis and assumptions are very
different. For CSR, we  develop requirements for guaranteed
recovery of the true shift for a given measurement matrix. For the smashed filter, the analysis focuses on random orthoprojections and provides
probabilities for correct recovery as a function of the number of
projections. Also, in this paper, we are particularly interested in Fourier
measurements, and many of the results are therefore tailored to
this setting.  The work presented here can therefore be seen
as complementary to what was presented in \cite{davenport2007smashed}
and its extension in \cite{davenport2010signal}.

%%%%%%%%%%%%%%%%%%%%%%%%%%%%%%%%%%%%%%%%%%%%%%%%%%%%%% 
%%%%%%%%%%%%%%%%%%%%%%%%%%%%%%%%%%%%%%%%%%%%%%%%%%%%%% 

\subsection{Notation and Assumptions}
We use normal fonts to represent scalars and bold fonts for vectors and
matrices. The notation $|\cdot|$ represents the absolute value for scalars, vectors
and matrices, and it returns the cardinality of a set if the
argument is a set.  For both vectors and matrices, $\|\cdot \|_0$ is the $\ell_0$-norm function that returns the
number of nonzero elements of its argument. Similarly, $\|\cdot\|_p$ represents the
$\ell_p$-norm. For a vector $\xx$, the $\ell_p$-norm is defined as $\|{\xx}\|_p \triangleq  (\sum_i |x_{i}|^p)^{1/p}$, where $x_i$ is the $i$th element of $\xx$.  For
a matrix $\XX$,  $\|\cdot\|_p$ is defined as
$ {\XX} \triangleq  (\sum_{i,j}| { X}_{i,j}|^p)^{1/p}$, where ${X }_{i,j}$ is the $(i,j)$-th element of $\XX$.
Furthermore, $\XX^*$ denotes the complex conjugate
transpose of $\XX$. Let 
%$\mathds{1}_{m \times n}$ 
%denote 
%$m %\times n$ matrix of ones, 
$\II_{n\times n}$ denote an $n\times n$ identity matrix, $\zero_{m \times
    n}$ an $m \times n$ matrix of zeros,
and $\mathbb{Z}$ be the set of integers. $\Re \{\cdot \}$ returns
the real part of its argument. We say that two
$n$-dimensional vectors $\yy$ and $\xx$ are related
by an $l$ cyclic-shift if
$\yy=\DD^l \xx$, where $\DD^l$ is defined as
\begin{equation}
\DD^l=\begin{bmatrix} \zero_{l \times (n-l) } & \II_{l \times
    l} \\ \II_{(n-l) \times (n-l) } & \zero_{(n-l) \times
    l} 
\end{bmatrix}.
\end{equation}
Throughout the paper, we will
assume that the shift is unique  \emph{up to a
multiple of $n$}. Also note that we are considering cyclic shifts.

%%%%%%%%%%%%%%%%%%%%%%%%%%%%%%%%%%%%%%%%%%%%%%%%%%%%%% 
%%%%%%%%%%%%%%%%%%%%%%%%%%%%%%%%%%%%%%%%%%%%%%%%%%%%%% 

\subsection{Organization}

In Sections \ref{nfCSR} and \ref{sec:Fourier}, we study the CSR
problem under the assumption that the measurements are noise free. Next, we
extend the results to noisy measurements in Section \ref{nCSR}. As we are particularly
interested in Fourier measurements, we will
 tailor the results to this particular choice of sensing matrix. 
%In Section
% \ref{sec:design}, we discuss the design question: What Fourier
% coefficients should be measured to maximize the likelihood of
 %recovering the true shift. 
Section \ref{sec:con} concludes the paper. All proofs are provided in the
 Appendice for clarity. 

\section{Noise-Free Compressive Shift Retrieval}\label{nfCSR} 

The shift retrieval problem is a multi-hypothesis testing problem:
Define the $s$th hypothesis $\mathcal{H}_s$, $s=0,\dots,n-1$, as
\begin{align*}
\mathcal{H}_s:\: \xx \text{ is related to } &\yy \text{ via a }
 s\text{-cyclic-shift},
\end{align*}
and accept $\mathcal{H}_{s}$ if
$\yy=\DD^s\xx$ and otherwise reject. Since the true shift is assumed unique, only one
hypothesis will be accepted and the corresponding shift  is
necessarily  the true solution. The connection to the cross-correlation is now trivial:
\begin{align}\nonumber
\|\yy- \DD^s\xx\|_2^2 =&\|\yy\|^2_2+ \|\DD^s\xx\|_2^2 -\yy^* \DD^s\xx-
\DD^s\xx^*\yy \\=&\|\yy\|^2_2+ \|\xx\|_2^2 -2\Re \{ \langle \yy,   \DD^s  \xx\rangle \}  
\end{align} where we use the fact that $\|\DD^s \xx\|_2^2=\|\xx\|_2^2$. Since
$\|\yy- \DD^s\xx\|_2^2 \geq 0$, equating $\yy=\DD^s\xx$ is equivalent
to minimizing  $\|\yy- \DD^s\xx\|_2^2$ or maximizing the real part
of the cross-correlation with respect to $s$:
\begin{equation}
\max \Re \{ \langle \yy,  \DD^s  \xx\rangle \}.
\label{eq:test2}
\end{equation}

Now, assume that the compressed measurement signals $\zz=\AA\yy \in \Ce^m$ and $\vv=\AA\xx \in \Ce^m$ are given and related to
the ground-truth signals $\xx\in \Ce^n$ and its shifted version
$\yy=\DD^l \xx\in \Ce^n$ via the sensing matrix $\AA\in
\Ce^{m\times n}$, $m<n$. The goal of CSR
is to recover  the shift $l$ relating $\xx$ and $\yy$ from the
compressed measurements $\zz$ and $\vv$.

Since only the compressed
measurements $\zz$ and $\vv$ are assumed available, we can not
evaluate $\yy=\DD^s\xx$ or maximize $\Re \{ \langle
\yy,   \DD^s  \xx \rangle \}$ for each hypothesis $\DD^s$. However, if $\AA^*\AA$ and $\DD^s$ commute for all
$s=0,\dots, n-1$, then 
\begin{align}\nonumber 
\yy=\DD^s\xx \quad  \Rightarrow &\quad \AA^*\AA \yy=\AA^*\AA\DD^s\xx =\DD^s
\AA^*\AA\xx \\ \Leftrightarrow & \quad \AA^*\zz=\DD^s\AA^*\vv. 
\end{align}
Hence, we could consider the test:  
\begin{equation}
\text{Accept } \mathcal{H}_{s} \text{ if
} \AA^*\zz=\DD^s\AA^*\vv \text{ and otherwise reject}.
\label{eq:testing}
\end{equation}
It is clear that if $s$ is such that $\yy=\DD^s\xx$, then
$\AA^*\zz=\DD^s\AA^*\vv$ will also hold. However, the other way around might
not be true. Therefore, we might erroneously accept a wrong hypotheses using \eqref{eq:testing}. The next theorem lists the conditions by which the testing \eqref{eq:testing} is guaranteed to accept the correct hypothesis. Notice that testing the condition  $
\AA^*\zz=\DD^s\AA^*\vv $ is equivalent to minimizing $
\|\AA^*\zz-\DD^s\AA^*\vv \|_2^2$ with respect to $s$. 

%%%%%%%%%%%%%%%%%%%%%%%%%%%%%%%%%%%%%%%%%%%%%%%%%%%%%%%
 \begin{thma}[\bf Shift Recovery from Low-Rate Data]\label{thm:first}
Let $\XX$ be an $n\times n$ matrix
 with the $i$th column equal to $\DD^i \xx$, $i=1,\dots,n,$ and define
  $\bar \DD^s = \AA  \DD^s \AA^*$.
If the sensing matrix $\AA$ satisfies the following conditions:
\begin{itemize}
\item[1)]  $\AA^* \AA \DD^s = \DD^s  \AA^* \AA$,  
\item[2)]  $\exists \alpha\in \R, \alpha \AA \AA^* = \II$ and 
\item[3)] all columns of $\AA\XX$  are different, 
\end{itemize}
then 
\begin{equation}\label{eq:prodtest4}
 \max_{s} 
\Re \{ \langle \zz,  \bar \DD^s  \vv\rangle \}  
 \end{equation} or equivalently the test \eqref{eq:testing} recovers the true shift.
\end{thma}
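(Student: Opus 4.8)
The plan is to reduce the statement to two facts — the correct hypothesis is always accepted, and no incorrect hypothesis is ever accepted — after first confirming that the maximization in \eqref{eq:prodtest4} is genuinely equivalent to the test \eqref{eq:testing}, so that it suffices to analyze the equality $\AA^*\zz=\DD^s\AA^*\vv$.

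First I would establish that equivalence. Using $\bar\DD^s=\AA\DD^s\AA^*$ together with $\zz=\AA\yy$ and $\vv=\AA\xx$, I rewrite $\langle\zz,\bar\DD^s\vv\rangle=\zz^*\AA\DD^s\AA^*\vv=(\AA^*\zz)^*\DD^s(\AA^*\vv)=\langle\AA^*\zz,\DD^s\AA^*\vv\rangle$. Since $\DD^s$ is a permutation matrix and hence unitary, $\|\DD^s\AA^*\vv\|_2=\|\AA^*\vv\|_2$ is independent of $s$; expanding $\|\AA^*\zz-\DD^s\AA^*\vv\|_2^2=\|\AA^*\zz\|_2^2+\|\AA^*\vv\|_2^2-2\Re\{\langle\AA^*\zz,\DD^s\AA^*\vv\rangle\}$ then shows that maximizing $\Re\{\langle\zz,\bar\DD^s\vv\rangle\}$ over $s$ is the same as minimizing $\|\AA^*\zz-\DD^s\AA^*\vv\|_2^2$, i.e. as the test \eqref{eq:testing}. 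Next, the true shift is accepted: writing $\yy=\DD^l\xx$ and invoking Condition~1 to commute $\AA^*\AA$ past $\DD^l$, I obtain $\AA^*\zz=\AA^*\AA\yy=\AA^*\AA\DD^l\xx=\DD^l\AA^*\AA\xx=\DD^l\AA^*\vv$, so $s=l$ satisfies the test and drives the nonnegative objective to its minimum value zero.

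The crux is showing that no wrong $s$ is accepted. Suppose $\AA^*\zz=\DD^s\AA^*\vv$. Subtracting the identity $\AA^*\zz=\DD^l\AA^*\vv$ just derived gives $(\DD^l-\DD^s)\AA^*\vv=0$, and substituting $\AA^*\vv=\AA^*\AA\xx$ and again using Condition~1 yields $\AA^*\AA(\DD^l-\DD^s)\xx=0$. Left-multiplying by $\AA$ and applying Condition~2, $\AA\AA^*=\alpha^{-1}\II$, collapses $\AA\AA^*\AA$ to $\alpha^{-1}\AA$, so $\AA(\DD^l-\DD^s)\xx=0$, that is $\AA\DD^l\xx=\AA\DD^s\xx$. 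But these are the $l$th and $s$th columns of $\AA\XX$, which by Condition~3 coincide only when $l\equiv s\pmod n$; hence the accepted hypothesis must be the true shift. Combining the three steps proves the claim.

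I expect the last step to be the main obstacle, since it is where all three hypotheses must interlock — Condition~1 to commute the shift through $\AA^*\AA$, Condition~2 to cancel $\AA\AA^*$ and strip the problem down to the columns of $\AA\XX$, and Condition~3 to turn equality of those columns into equality of shifts. The equivalence of the objectives and the acceptance of the true shift are comparatively routine algebra, so the write-up effort should concentrate on making the peeling-off of $\AA^*\AA$ fully rigorous.
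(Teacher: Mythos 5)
Your proof is correct, but it is organized differently from the paper's. The paper proceeds through two lemmas in the measurement domain: first (Lemma \ref{lem:first}) it shows that the $\ell_0$-constrained least-squares problem $\min_{\qq}\|\AA\yy-\AA\XX\qq\|_2^2$, $\|\qq\|_0=1$, recovers the true shift whenever the columns of $\AA\XX$ are distinct (a contradiction argument: two zero-objective solutions would force two columns of $\AA\XX$ to coincide); second (Lemma \ref{lem:two}) it shows that under Conditions 1 and 2 this problem is equivalent to maximizing $\Re\{\langle\zz,\bar\DD^s\vv\rangle\}$, by expanding the quadratic, noting $\|\AA\DD^s\xx\|_2$ is independent of $s$, and then inserting $\alpha\AA\AA^*=\II$ and commuting $\AA^*\AA$ past $\DD^s$. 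You instead work in the back-projected domain: you show \eqref{eq:prodtest4} is equivalent to minimizing $\|\AA^*\zz-\DD^s\AA^*\vv\|_2^2$ (pure algebra plus unitarity of $\DD^s$, no conditions needed), that the true shift $l$ attains zero there (Condition 1), and that any other accepted $s$ forces $\AA\DD^l\xx=\AA\DD^s\xx$ (Conditions 1 and 2, after left-multiplying $\AA^*\AA(\DD^l-\DD^s)\xx=\zero$ by $\AA$ and cancelling via $\AA\AA^*=\alpha^{-1}\II$), contradicting Condition 3. The ingredients are the same, but your decomposition has two advantages: it directly proves the ``or equivalently the test \eqref{eq:testing}'' clause of the theorem, which the paper only asserts informally before stating the theorem, and it is more self-contained. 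What the paper's route buys is reusability: the $\qq$-formulation and Lemma \ref{lem:two} are exactly what the proofs of the noisy results (Theorem \ref{thm:noisyrecov} and its corollaries) build on, so the lemma structure pays off later in the paper. One cosmetic point: you should note explicitly that $\alpha\neq 0$ (immediate, since $\alpha\AA\AA^*=\II$ rules out $\alpha=0$) before dividing by it.
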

%%%%%%%%%%%%%%%%%%%%%%%%%%%%%%%%%%%%%%%%%%%%%%%%%%%%%%%
The conditions of Theorem \ref{thm:first} may seem
restrictive. However, as we will show in Lemma \ref{lem:com}, if $\AA$ is
chosen as a partial Fourier matrix, then  the first two conditions of
Theorem \ref{thm:first}  are trivially satisfied. The last condition
is the only one that needs to be checked and will lead to a condition
on the sampled Fourier coefficients. 

The conditions of Theorem \ref{thm:first} can be checked prior to estimating the shift. However, knowing the
estimate of the shift, it is easy to see from the proof (see the proof
of Lemma
\ref{lem:first}) that it is enough to check 
 if the column of $\AA\XX$ associated
with the estimate of the shift is different than all the other
columns of $\AA \XX$. Hence, we do not need to check if all columns of $\AA\XX$
are different. This conclusion is formulated in the following
corollary, which is less conservative than Theorem \ref{thm:first}.
%%%%%%%%%%%%%%%%%%%%%%%%%%%%%%%%%%%%%%%%%%%%%%%%%%%%%%%
\begin{corr}[\bf Test for True Shift]\label{cor:test}
Let $\XX$ be an $n\times n$ matrix
 with the $i$th column equal to $\DD^i \xx$, $i=1,\dots,n,$ and define
  $\bar \DD^s = \AA  \DD^s \AA^*$.
If the sensing matrix $\AA$ satisfies the following conditions:
\begin{itemize}
\item[1)]  $\AA^* \AA \DD^s = \DD^s  \AA^* \AA$,  and
\item[2)]  $\exists \alpha\in \R, \alpha \AA \AA^* = \II$,  
\end{itemize}
then 
\begin{equation}
 s^* =\argmax_{s} 
\Re \{ \langle \zz,  \bar \DD^s  \vv\rangle \}  
 \end{equation} is the true shift if the $s^*$th column of $\AA\XX$ is
 different than all the other columns of  $\AA\XX$.
\end{corr}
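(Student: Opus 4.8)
The plan is to reduce the compressive objective \eqref{eq:prodtest4} to an ordinary correlation among the columns of $\AA\XX$ and then apply a Cauchy--Schwarz equality argument. Write $\MM=\AA^*\AA$ and let $\qq_s=\AA\DD^s\xx$ be the $s$th column of $\AA\XX$, so that the true measurement is $\zz=\AA\yy=\AA\DD^l\xx=\qq_l$, where $l$ is the true shift. Substituting $\bar\DD^s=\AA\DD^s\AA^*$, $\zz=\AA\yy$ and $\vv=\AA\xx$ into the inner product gives $\langle\zz,\bar\DD^s\vv\rangle=\yy^*\MM\DD^s\MM\xx$. Condition 1) lets me commute a factor $\MM$ past $\DD^s$, and condition 2) gives $\MM^2=\AA^*(\AA\AA^*)\AA=\tfrac1\alpha\MM$, so that $\MM\DD^s\MM=\MM^2\DD^s=\tfrac1\alpha\MM\DD^s$. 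Hence $\langle\zz,\bar\DD^s\vv\rangle=\tfrac1\alpha\yy^*\MM\DD^s\xx=\tfrac1\alpha\langle\AA\yy,\AA\DD^s\xx\rangle=\tfrac1\alpha\langle\qq_l,\qq_s\rangle$. Because $\AA\AA^*$ is positive semidefinite, the relation $\alpha\AA\AA^*=\II$ forces $\alpha>0$, so maximizing \eqref{eq:prodtest4} over $s$ is identical to maximizing $\Re\{\langle\qq_l,\qq_s\rangle\}$.

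Next I would show that every column of $\AA\XX$ has a common norm. Since $\DD^s$ is a permutation, hence unitary, matrix that commutes with $\MM$, we get $\|\qq_s\|_2^2=\xx^*(\DD^s)^*\MM\DD^s\xx=\xx^*\MM\xx=\|\AA\xx\|_2^2=:\rho^2$, independent of $s$. With this equal-norm property, Cauchy--Schwarz yields $\Re\{\langle\qq_l,\qq_s\rangle\}\le|\langle\qq_l,\qq_s\rangle|\le\rho^2$. Attaining the upper bound forces $\qq_s$ and $\qq_l$ to be linearly dependent with $|\langle\qq_l,\qq_s\rangle|=\rho^2$, and the extra requirement $\Re\{\langle\qq_l,\qq_s\rangle\}=\rho^2$ then pins the proportionality constant to $1$, i.e. $\qq_s=\qq_l$.

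The conclusion now follows by a short contradiction. Taking $s=l$ attains the bound since $\langle\qq_l,\qq_l\rangle=\rho^2$, so the true shift $l$ is always a maximizer; consequently any maximizer $s^*$ satisfies $\Re\{\langle\qq_l,\qq_{s^*}\rangle\}=\rho^2$, and therefore $\qq_{s^*}=\qq_l$ by the previous step. If the $s^*$th column of $\AA\XX$ differs from every other column, then $s^*\neq l$ would give $\qq_{s^*}\neq\qq_l$, a contradiction; hence $s^*=l$. In contrast to Theorem \ref{thm:first}, this argument invokes distinctness of only the single column indexed by the computed maximizer, which is exactly the claimed weakening.

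I do not anticipate a genuine obstacle, since the whole argument is driven by the two algebraic identities $\MM\DD^s=\DD^s\MM$ and $\MM^2=\tfrac1\alpha\MM$. The one point deserving care is the equality case of Cauchy--Schwarz, and it is clean precisely because the equal-norm computation shows every column has norm $\rho$: a maximal real part of $\langle\qq_l,\qq_s\rangle$ cannot be produced by a column of larger norm or a more favorable phase, but only by $\qq_s=\qq_l$ itself.
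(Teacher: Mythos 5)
Your proof is correct, but it reaches the key fact $\qq_{s^*}=\qq_l$ by a different mechanism than the paper. The paper obtains Corollary \ref{cor:test} as a by-product of Lemmas \ref{lem:first} and \ref{lem:two}: conditions 1) and 2) turn the correlation test into the least-squares problem $\min_{\qq\in\{0,1\}^n,\,\|\qq\|_0=1}\|\AA\yy-\AA\XX\qq\|_2^2$, in which the true shift attains residual zero; hence any maximizer $\tilde\qq$ also attains residual zero, giving $\AA\XX(\tilde\qq-\qq)=\zero$, i.e.\ the $s^*$th column of $\AA\XX$ coincides with the true-shift column, which the distinctness hypothesis on that single column forbids unless $s^*=l$. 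Your reduction to $\max_s\Re\{\langle\qq_l,\qq_s\rangle\}$, including the equal-norm computation $\|\qq_s\|_2=\rho$, is exactly the content of Lemma \ref{lem:two} (you make explicit the fact $\alpha>0$ that the paper uses implicitly); where you diverge is in replacing the zero-residual argument by the equality case of Cauchy--Schwarz plus phase pinning. Given equal norms the two are equivalent, since $\|\qq_l-\qq_s\|_2^2=2\rho^2-2\Re\{\langle\qq_l,\qq_s\rangle\}$, so $\Re\{\langle\qq_l,\qq_s\rangle\}=\rho^2$ if and only if $\qq_s=\qq_l$; this identity also shows the paper's route is slightly more economical, as it never needs the linear-dependence and phase analysis, and it covers the degenerate case $\AA\xx=\zero$ (where your phase-pinning step is void, though there the conclusion $\qq_{s^*}=\qq_l$ holds trivially and the distinctness hypothesis is vacuous, so no actual gap arises). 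What your version buys is self-containedness: you never need to set up the $\{0,1\}^n$ least-squares reformulation of Lemma \ref{lem:first}, and the argument reads directly off the correlation objective.
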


\section{Compressive Shift Retrieval using Fourier Coefficients}
\label{sec:Fourier}
%%%%%%%%%%%%%%%%%%%%%%%%%%%%%%%%%%%%%%%%%%%%%%%%%%%%
Of particular interest is the case where $\AA$ is made up of a partial Fourier
basis. That is, $\AA$ takes the form
\begin{equation*}
\AA=\frac{1}{\sqrt{n}}\begin{bmatrix}
1& e^{- \frac{2j \pi k_1 }{n} } &
  e^{- \frac{4j \pi k_1 }{n} }& \cdots & e^{-
    \frac{2(n-1) j  \pi k_1 }{n} } \\1& e^{- \frac{2j \pi k_2
    }{n} }& \ddots &&  e^{- \frac{2(n-1) j
    \pi k_2  }{n} } \\ \vdots &\vdots&& & \\1&
e^{- \frac{2j \pi k_m }{n} }& e^{- \frac{4j
    \pi k_m }{n} } &\cdots & e^{-
  \frac{2(n-1) j
    \pi k_m  }{n} }
\end{bmatrix}
\end{equation*}
 where $k_1,\dots, k_m \in \{0,1,2,\dots n-1\}, m \leq n$.  For this
 specific choice, 
\begin{equation*}
\AA\XX =\frac{1}{\sqrt{n}}\begin{bmatrix} X_{k_1} & X_{k_1} e^{\frac{2k_1 \pi j}{n}}& \cdots
  &  X_{k_1} e^{\frac{2(n-1)k_1 \pi j}{n}}  \\  X_{k_2} & \ddots &&
  X_{k_2} e^{\frac{2(n-1) k_2\pi j}{n}}  \\ \vdots&& & \\ X_{k_m} &
  X_{k_m} e^{\frac{2 k_m \pi
    j}{n}}  &\cdots &  X_{k_m} e^{\frac{2(n-1) k_m \pi j}{n}}  
\end{bmatrix}
\end{equation*}
where $X_r$ denotes the $r$th Fourier coefficient of the
Fourier transform of $\xx$.

For a sensing matrix made up by a partial Fourier basis, we have the
following useful result:
\begin{lem} 
\label{lem:com}
 Let $\AA $ be a partial Fourier matrix. Then $\DD^s \AA^*\AA = \AA^* \AA \DD^s$ for
 all $s = 1,\dots, n$.
\end{lem}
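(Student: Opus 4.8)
The plan is to make the Fourier structure of $\AA$ explicit and show that every row of a partial Fourier matrix is a \emph{left eigenvector} of the cyclic shift. Concretely, I would first prove that
\begin{equation*}
\AA\DD^s = \boldsymbol{\Phi}_s \AA, \qquad \boldsymbol{\Phi}_s = \mathrm{diag}\!\left(e^{-2\pi j k_1 s/n},\dots,e^{-2\pi j k_m s/n}\right),
\end{equation*}
for every $s$. Note that for $s=n$ we have $\DD^n=\II$ and $\boldsymbol{\Phi}_n=\II$, so the statement is trivial there, and in fact $s=1$ already generates all powers; but it costs nothing to treat general $s$ directly. Once this eigenrelation is in hand the rest is formal manipulation, so this is the only real content of the lemma.

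To establish the eigenrelation I would argue entrywise. The $t$th entry ($t=0,\dots,n-1$) of the $r$th row of $\AA$ is $\tfrac{1}{\sqrt n}\,e^{-2\pi j k_r t/n}$. Right-multiplication by $\DD^s$ permutes the columns cyclically, moving the entry in column $t$ to column $(t-s)\bmod n$ (the precise direction being fixed by the definition of $\DD^s$, which is immaterial for the commutation). Because the map $t\mapsto e^{-2\pi j k_r t/n}$ is $n$-periodic, the shifted index $(q+s)\bmod n$ can be replaced by $q+s$ inside the exponential, and the entire row is rescaled by the common unimodular factor $e^{-2\pi j k_r s/n}$. This yields $\AA\DD^s=\boldsymbol{\Phi}_s\AA$. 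I expect this periodicity step to be the main (and essentially only) obstacle: it is exactly where the \emph{cyclic} nature of $\DD^s$ is used, and it is the reason the argument would break for a generic non-circular shift.

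To finish, I would take the conjugate transpose of the eigenrelation. Since $\DD^s$ is a real permutation matrix, $(\DD^s)^*=(\DD^s)^{\T}=\DD^{-s}$, and since $\boldsymbol{\Phi}_s$ is unitary diagonal, $\boldsymbol{\Phi}_s^*=\boldsymbol{\Phi}_s^{-1}$. Conjugating $\AA\DD^s=\boldsymbol{\Phi}_s\AA$ gives $\DD^{-s}\AA^*=\AA^*\boldsymbol{\Phi}_s^{-1}$, which rearranges to $\DD^s\AA^*=\AA^*\boldsymbol{\Phi}_s$. Combining the two relations,
\begin{equation*}
\DD^s\,\AA^*\AA = (\DD^s\AA^*)\AA = \AA^*\boldsymbol{\Phi}_s\AA = \AA^*(\AA\DD^s) = \AA^*\AA\,\DD^s,
\end{equation*}
which is the claimed identity for all $s=1,\dots,n$. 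Equivalently, one could simply observe that the entries of $\AA^*\AA$ depend only on the index difference modulo $n$, so $\AA^*\AA$ is circulant and therefore commutes with the cyclic shift $\DD$; the eigenvector computation above is just the diagonalized form of this fact.
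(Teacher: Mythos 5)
Your proof is correct, and it takes a genuinely different route from the paper's. The paper argues entrywise on the two products: it writes $\MM=\AA\DD^s$ and $\QQ=\AA(\DD^s)^*$ as column permutations of $\AA$, computes the $(p,r)$ entry of $\AA^*\AA\DD^s$ as $(\AA_{:,p})^*\AA_{:,r-s}=\frac{1}{n}\sum_{i=1}^m e^{2j\pi k_i(p-r+s)/n}$ and the $(p,r)$ entry of $\DD^s\AA^*\AA$ as $(\AA_{:,p+s})^*\AA_{:,r}$, and observes that the two exponential sums coincide --- in effect the closing remark of your own write-up, that the entries of $\AA^*\AA$ depend only on $(p-r)\bmod n$, so $\AA^*\AA$ is circulant. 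Your main argument is instead structural: you establish the eigenrelation $\AA\DD^s=\boldsymbol{\Phi}_s\AA$ with $\boldsymbol{\Phi}_s$ unitary diagonal, conjugate-transpose it to $\DD^s\AA^*=\AA^*\boldsymbol{\Phi}_s$, and chain the two identities, never forming either triple product entrywise; you were also right that the direction convention of the cyclic shift is immaterial, since any sign flip propagates consistently through both relations. Both proofs ultimately rest on the same two facts (right-multiplication by $\DD^s$ cyclically permutes columns, and the exponentials are $n$-periodic), but yours isolates a stronger, reusable statement: combined with $\AA\AA^*$ being a multiple of the identity (rows of a partial Fourier matrix with distinct $k_i$ are orthogonal), the eigenrelation immediately gives that $\bar\DD^s=\AA\DD^s\AA^*=\boldsymbol{\Phi}_s\AA\AA^*$ is diagonal, which is precisely the fact the paper re-derives by a separate computation at the end of the proof of Corollary~\ref{cor:first}. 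The paper's entrywise comparison is more elementary and self-contained; your version is shorter on the commutation itself and does double duty downstream.
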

Using this result in Theorem 1 gives the following corollary:
%%%%%%%%%%%%%%%%%%%%%%%%%%%%%%%%%%%%%%%%%%%%%%%%%%%%%%%
\begin{corr}[\bf Shift Recovery from Low Rate Fourier Data]\label{cor:first}
With $\AA$  denoting a partial Fourier matrix and  $z_i$ and $v_i$  the $i$th element
of $\zz$ and $\vv$,
\begin{equation}\label{eq:simptest}
\max_{s} \Re \left \{ \sum_{i=1}^m z_i v_i e^{\frac{-2 \pi j k_i s}{n}}
\right \}
\end{equation}
recovers the true shift if there exists $p\in \{1,\dots,m \}$ such
that $X_{k_p}\neq 0$ and $\{1,\dots,n-1 \} \frac{k_p}{n}$ contains no
integers. In particular, measuring  only the first Fourier
coefficients ($k_1=1$) of $\xx$ and $\yy$ would, as long as the
coefficients are nonzero, suffice  to
recover the true shift.
\end{corr}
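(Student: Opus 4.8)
The plan is to obtain the corollary as a direct specialization of Theorem~\ref{thm:first} to the partial Fourier matrix $\AA$, so the work splits into (i) verifying the three hypotheses of Theorem~\ref{thm:first} and (ii) rewriting the objective \eqref{eq:prodtest4} in the explicit scalar form \eqref{eq:simptest}. Hypothesis~1) is immediate from Lemma~\ref{lem:com}, which states that $\DD^s$ and $\AA^*\AA$ commute for every $s$ whenever $\AA$ is a partial Fourier matrix, so no separate argument is needed there.

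For hypothesis~2) I would compute $\AA\AA^*$ directly: its $(i,i')$ entry is $\tfrac1n\sum_{t=0}^{n-1}e^{2\pi j(k_{i'}-k_i)t/n}$, a geometric sum equal to $1$ when $k_i=k_{i'}$ and $0$ otherwise. Since the sampled frequencies $k_1,\dots,k_m$ are distinct, this gives $\AA\AA^*=\II$, so condition~2) holds with $\alpha=1$. The same computation evaluates $\bar\DD^s=\AA\DD^s\AA^*$: because the discrete Fourier transform diagonalizes the cyclic shift, $\bar\DD^s$ is the diagonal matrix with entries $e^{-2\pi j k_i s/n}$. Substituting this diagonal form into $\langle\zz,\bar\DD^s\vv\rangle$ collapses the inner product into the single sum displayed in \eqref{eq:simptest}, which establishes (ii).

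The heart of the argument is hypothesis~3), that all columns of $\AA\XX$ be distinct. From the displayed formula for $\AA\XX$, its $t$th column ($t=1,\dots,n$) is $\tfrac1{\sqrt n}\big(X_{k_1}e^{2\pi j(t-1)k_1/n},\dots,X_{k_m}e^{2\pi j(t-1)k_m/n}\big)^\T$, so columns $t$ and $t'$ coincide exactly when $e^{2\pi j(t-t')k_i/n}=1$ for every $i$ with $X_{k_i}\neq 0$. I would specialize this to the index $p$ supplied by the hypothesis: since $X_{k_p}\neq 0$, equality of two distinct columns would force $(t-t')k_p/n\in\Z$ with $t-t'$ a nonzero residue mod $n$, i.e.\ $|t-t'|\in\{1,\dots,n-1\}$. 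The assumption that $\{1,\dots,n-1\}\tfrac{k_p}{n}$ contains no integer rules this out, so no two columns can be equal and condition~3) holds in full. The clean way to see this is the equivalence of the stated number-theoretic condition with $\gcd(k_p,n)=1$: some $tk_p/n$ with $t\in\{1,\dots,n-1\}$ is an integer precisely when $e^{2\pi jk_p/n}$ has order strictly less than $n$. This translation of \emph{no integer multiples} into genuine column distinctness is the one place where care is needed and is the main obstacle.

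With all three hypotheses verified, Theorem~\ref{thm:first} yields that maximizing \eqref{eq:simptest}, equivalently the test \eqref{eq:testing}, recovers the true shift. Finally, for the special case $k_1=1$ I would simply note that $\gcd(1,n)=1$, so $\{1,\dots,n-1\}\tfrac1n=\{\tfrac1n,\dots,\tfrac{n-1}n\}$ contains no integer; hence, as long as $X_1\neq 0$, a single Fourier coefficient ($m=1$, $k_1=1$) already satisfies the hypothesis and suffices to recover the shift.
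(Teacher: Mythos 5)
Your proposal is correct and follows essentially the same route as the paper's own proof: condition 1) via Lemma~\ref{lem:com}, condition 2) from the orthonormality of the selected Fourier rows, condition 3) by showing a single row $p$ with $X_{k_p}\neq 0$ has distinct entries under the no-integer condition, and finally the diagonalization $\bar\DD^s = \AA\DD^s\AA^*$ to reduce \eqref{eq:prodtest4} to \eqref{eq:simptest}. Your added observation that the no-integer condition is equivalent to $\gcd(k_p,n)=1$ is a nice clarification not made explicit in the paper, but it does not change the substance of the argument.
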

%%%%%%%%%%%%%%%%%%%%%%%%%%%%%%%%%%%%%%%%%%%%%%%%%%%%%%%
Remarkably,  in the extreme case when $m=1$, the corollary states that all we need is two scalar measurements,
$z$ and $v$, to perfectly recover the true shift. The scalar
measurements can be any nonzero Fourier coefficient of $\xx $ and $\yy$ as
long as $\{1,\dots,n-1 \} \frac{k_1}{n}$ contains no
integers. As noted in the corollary, the first Fourier coefficients
($k_1=1$) of $\xx$ and $\yy$  would suffice.  Also note that only $2mn$ multiplications are  required to
evaluate the test. This should be compared with $n^2$ multiplications
to evaluate the cross-correlation for the full uncompressed signals $\xx$ and $\yy$.
 Corollary~\ref{cor:first} is easy to check but  more conservative than both Theorem
\ref{thm:first} and Corollary \ref{cor:test}.

%%%%%%%%%%%%%%%%%%%%%%%%%%%%%%%%%%%%%%%%%%%%%%%%%%%%%%%
%\begin{exa}[A Monte Carlo Simulation]
%In this example we carry out a Monte Carlo
%simulation. 
To validate the results, we carried out the following example. In each
trial we let the sample dimension $m$ and the shift $l$ be random integers between 1 and 9 and  generate $\xx$ by sampling from a
$n$-dimensional  uniform distribution. We let $n=10$ and make sure
that $\AA $ in each trial is a
partial Fourier basis satisfying the assumptions of  Corollary \ref{cor:first}. We carry out 10000
trials. 
The true shift is successfully recovered in each trial by the simplified test
\eqref{eq:simptest}, namely, with 100\% success rate. This is
quite remarkable since when $m=1$, we recover the true shift using only
two scalar measurement $\zz$ and $\vv$ and $1/5$ of the multiplications that
maximizing the real part of inner product between the original signals
\eqref{eq:test2} would need.  
%\end{exa}
%%%%%%%%%%%%%%%%%%%%%%%%%%%%%%%%%%%%%%%%%%%%%%%%%%%%%%%

\section{Noisy Compressive Shift Retrieval}\label{nCSR} 
Now we consider the noisy version of compressive shift retrieval, where the measurements $\zz$ and $\vv$ are perturbed
by  noise:
\begin{align}
\tilde \zz = \zz+ \ee_z,\quad 
\tilde \vv = \vv+ \ee_v.
\end{align}
Similar to the noise-free case, here we can also guarantee the recovery of
the true shift. Our main result is given in the following theorem: 
%%%%%%%%%%%%%%%%%%%%%%%%%%%%%%%%%%%%%%%%%%%%%%%%%%%%%%%
\begin{thma}[\bf Noisy Shift Recovery from Low-Rate Data]\label{thm:noisyrecov} 
Let $\tilde \xx$ be such that $\tilde \vv=\AA 
  \tilde \xx$, the $i$th column of $ \tilde \XX $ be  shifted
  versions of $\tilde \xx$,  and assume that $\AA $ is a partial Fourier
  matrix and that the noisy measurements are used in \eqref{eq:simptest} to estimate the shift. If
the 
$\ell_2$-norm difference between any two columns of $\AA  \tilde \XX $ is greater
than {\small
 \begin{equation}\nonumber 
 \Delta_{\zz \vv} \triangleq
\| \ee_{z}\|_2+\| \ee_v \|_2+\sqrt{ \|\tilde
     \vv\|_2^2+\|\tilde \zz\|_2^2-2 \max_s \Re \{ \langle \tilde \zz,
     \bar \DD^s  \tilde \vv\rangle \}   },
 \end{equation} }
then the estimate of the shift  is not affected by the noise. 
\end{thma}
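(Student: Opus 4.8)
The plan is to rewrite the maximization in \eqref{eq:simptest} as a nearest-column search and then bound the influence of the noise by a reverse-triangle-inequality argument in which the radical appearing in $\Delta_{\zz \vv}$ cancels. I first record the geometry that makes this possible. Since $\AA$ is a partial Fourier matrix, Lemma~\ref{lem:com} supplies the commutation $\AA^*\AA\DD^s=\DD^s\AA^*\AA$, and the rows of $\AA$ are orthonormal so that condition 2) of Corollary~\ref{cor:test} holds with $\alpha=1$, i.e.\ $\AA\AA^*=\II$. Hence the $s$th column of $\AA\tilde\XX$ is $\AA\DD^s\tilde\xx=\bar\DD^s\tilde\vv$, and this column does not depend on which $\tilde\xx$ with $\AA\tilde\xx=\tilde\vv$ is chosen. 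Combining the commutation with $\AA\AA^*=\II$ and the unitarity of $\DD^s$ gives $\|\bar\DD^s\tilde\vv\|_2=\|\tilde\vv\|_2$ and $\|\bar\DD^s\ee_v\|_2=\|\ee_v\|_2$ for every $s$. In particular $\|\bar\DD^s\tilde\vv\|_2^2$ is constant in $s$, so expanding
\[
\|\tilde\zz-\bar\DD^s\tilde\vv\|_2^2=\|\tilde\zz\|_2^2+\|\tilde\vv\|_2^2-2\Re\{\langle\tilde\zz,\bar\DD^s\tilde\vv\rangle\}
\]
shows that maximizing \eqref{eq:simptest} is the same as minimizing $\|\tilde\zz-\bar\DD^s\tilde\vv\|_2$ over $s$; denote the minimizer by $\hat s$.

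The key observation is that the square-root term in $\Delta_{\zz \vv}$ is exactly this minimal residual. Taking the minimum over $s$ in the identity above gives $\min_s\|\tilde\zz-\bar\DD^s\tilde\vv\|_2^2=\|\tilde\zz\|_2^2+\|\tilde\vv\|_2^2-2\max_s\Re\{\langle\tilde\zz,\bar\DD^s\tilde\vv\rangle\}$, so the radical equals $\|\tilde\zz-\bar\DD^{\hat s}\tilde\vv\|_2$ and
\[
\Delta_{\zz \vv}=\|\ee_z\|_2+\|\ee_v\|_2+\|\tilde\zz-\bar\DD^{\hat s}\tilde\vv\|_2 .
\]

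Next I bound the residual at the true shift $l$, where $\yy=\DD^l\xx$. Using the commutation and $\AA\AA^*=\II$ exactly as in the noise-free analysis, $\bar\DD^l\vv=\AA\DD^l\AA^*\AA\xx=\AA\AA^*\AA\DD^l\xx=\AA\DD^l\xx=\zz$, whence
\[
\tilde\zz-\bar\DD^l\tilde\vv=(\zz+\ee_z)-\bar\DD^l(\vv+\ee_v)=\ee_z-\bar\DD^l\ee_v ,
\]
and the triangle inequality with $\|\bar\DD^l\ee_v\|_2=\|\ee_v\|_2$ yields $\|\tilde\zz-\bar\DD^l\tilde\vv\|_2\le\|\ee_z\|_2+\|\ee_v\|_2$. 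Now suppose, for contradiction, that the noisy test chooses $\hat s\neq l$. Because $\bar\DD^s\tilde\vv$ is the $s$th column of $\AA\tilde\XX$, the separation hypothesis applied to the columns $l$ and $\hat s$ reads $\|\bar\DD^l\tilde\vv-\bar\DD^{\hat s}\tilde\vv\|_2>\Delta_{\zz \vv}$. The reverse triangle inequality then gives
\[
\|\tilde\zz-\bar\DD^{\hat s}\tilde\vv\|_2\ge\|\bar\DD^l\tilde\vv-\bar\DD^{\hat s}\tilde\vv\|_2-\|\tilde\zz-\bar\DD^l\tilde\vv\|_2>\Delta_{\zz \vv}-\|\tilde\zz-\bar\DD^l\tilde\vv\|_2 .
\]
Substituting the expression for $\Delta_{\zz \vv}$, the minimal-residual term $\|\tilde\zz-\bar\DD^{\hat s}\tilde\vv\|_2$ cancels from both sides and leaves $\|\tilde\zz-\bar\DD^l\tilde\vv\|_2>\|\ee_z\|_2+\|\ee_v\|_2$, contradicting the bound just established. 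Hence $\hat s=l$ and the estimate is unchanged by the noise.

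I expect the crux to be the identification of the radical in $\Delta_{\zz \vv}$ with the minimal noisy residual $\|\tilde\zz-\bar\DD^{\hat s}\tilde\vv\|_2$: it is precisely this term that cancels in the reverse-triangle-inequality chain and thereby makes the otherwise loose column-separation bound exactly sharp enough to conclude. The supporting facts---the commutation of Lemma~\ref{lem:com}, $\AA\AA^*=\II$, and the norm preservation $\|\bar\DD^s\cdot\|_2=\|\cdot\|_2$---are routine but indispensable, since they let me treat the columns of $\AA\tilde\XX$ as the candidate signals $\bar\DD^s\tilde\vv$ and guarantee that the perturbation $\bar\DD^s\ee_v$ of each column has norm exactly $\|\ee_v\|_2$.
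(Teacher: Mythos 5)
Your proof is correct, and its central estimate is the same one the paper uses---the identification of the radical in $\Delta_{\zz\vv}$ with the optimal noisy residual, followed by a triangle-inequality bound involving $\|\ee_z\|_2+\|\ee_v\|_2$ and the column-separation hypothesis---but you anchor the comparison differently, and that changes what you end up proving. The paper works in the $\ell_0$-constrained least-squares formulation $\min_{\qq}\|\tilde\zz-\AA\tilde\XX\qq\|_2^2$, compares the noisy minimizer $\hat\qq$ against a minimizer $\qq_0$ of the noise-free problem (using $\zz=\AA\XX\qq_0$ and $\|\AA\HH\qq_0\|_2=\|\ee_v\|_2$), and concludes $\hat\qq=\qq_0$, which is literally the stated conclusion that the estimate is unaffected by noise. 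You instead anchor at the true shift $l$, using $\bar\DD^l\vv=\zz$, and run a contradiction in which the minimal residual cancels, yielding the strictly stronger conclusion $\hat s=l$. The two anchors are interchangeable because in the paper's setup ($\yy=\DD^l\xx$ exactly) the true shift is always a zero-residual solution of the noise-free compressed problem, so the paper's $\qq_0$ may be taken to be the true-shift indicator; your version makes explicit that the hypotheses of Theorem~\ref{thm:noisyrecov} already guarantee recovery of the \emph{true} shift, so the paper's disclaimer after the theorem, and the extra $2\|\ee_v\|_2$ separation condition of Corollary~\ref{cor:truerecnoise}, are not actually needed under these hypotheses. One step you leave implicit: to conclude verbatim that "the estimate is unchanged by the noise" you should note that $l$ is also the unique noise-free maximizer of \eqref{eq:simptest}---it attains zero noise-free residual, and if some $s'\neq l$ also did, your argument applied with $s'$ in place of $l$ would force $\hat s=s'$, contradicting $\hat s=l$---so the noise-free and noisy estimates both equal $l$.
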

Note that the theorem only states that the noise does not affect the
estimate of the shift. It does not state that the shift will be the
true shift.

%%%%%%%%%%%%%%%%%%%%%%%%%%%%%%%%%%%%%%%%%%%%%%%%%%%%%%%
%%%%%%%%%%%%%%%%%%%%%%%%%%%%%%%%%%%%%%%%%%%%%%%%%%%%%% 
%\begin{exa}[\bf Recovery of Shift from Noisy Data]
We illustrate the results % in the following example %.  In this
                                %experiment 
%where we 
by running a Monte Carlo simulation
consisting of 10000 trials for each sample dimension $m=1,\dots,10,$ and for two different
SNR levels. 
In Figure~\ref{fig:highSNR}, 10 histograms are shown (corresponding to
$m=1,\dots,10$) for the
\begin{equation}
\text{SNR}=\frac{\|\zz\|_2 ^2 }{ \|\tilde \zz-\zz\|_2 ^2}  
\end{equation}
being 2 (low SNR) and in Figure~\ref{fig:lowSNR}, $SNR=10$ (high SNR).
The errors $\ee_z$ and $\ee_x$ were both generated by sampling from
\begin{equation}
\mathcal{N}(0,\sigma^2)+j \mathcal{N}(0,\sigma^2).
\end{equation}  
We further use $n=10$, $l=5$ and sample $\xx$  from a uniform (0,1)-distribution. The conclusion from the simulation is that the smaller
the $m$, the more the estimate of the shift is sensitive to
noise. Notice that when $m=10$ the test \eqref{eq:simptest} reduces to
the classical test of maximizing the cross-correlation.

 We
can now use Theorem \ref{thm:noisyrecov} to check if the noise
affected the estimate of the shift or not in each of the trials. For
$m=2$ and the high SNR, 40\% of the trials satisfied the conditions of
Theorem \ref{thm:noisyrecov}  and the noise therefore did not affect
the shift estimates in those cases. Of the trails that satisfied the
conditions, all predicted the true shift and none a false shift. Note
however that Theorem \ref{thm:noisyrecov} only states that if the
conditions are satisfied, then the estimated shift is  the same
as if we would have used the noise free compressed
measurements in the test \eqref{eq:simptest}. It does not state that
the estimate will be the true shift.

\begin{figure}[h!]
\includegraphics[width=\bb\columnwidth]{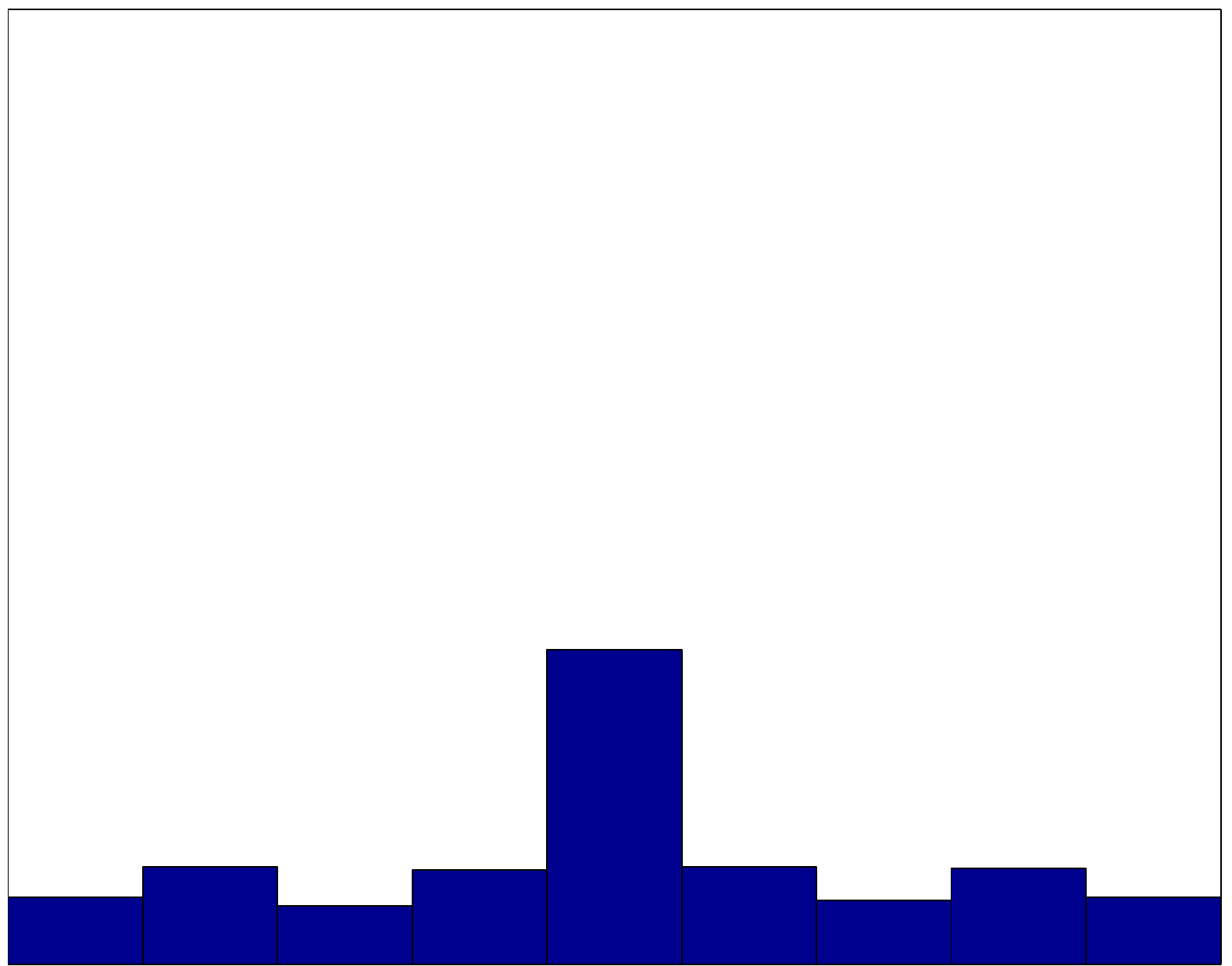}\includegraphics[width=\bb\columnwidth]{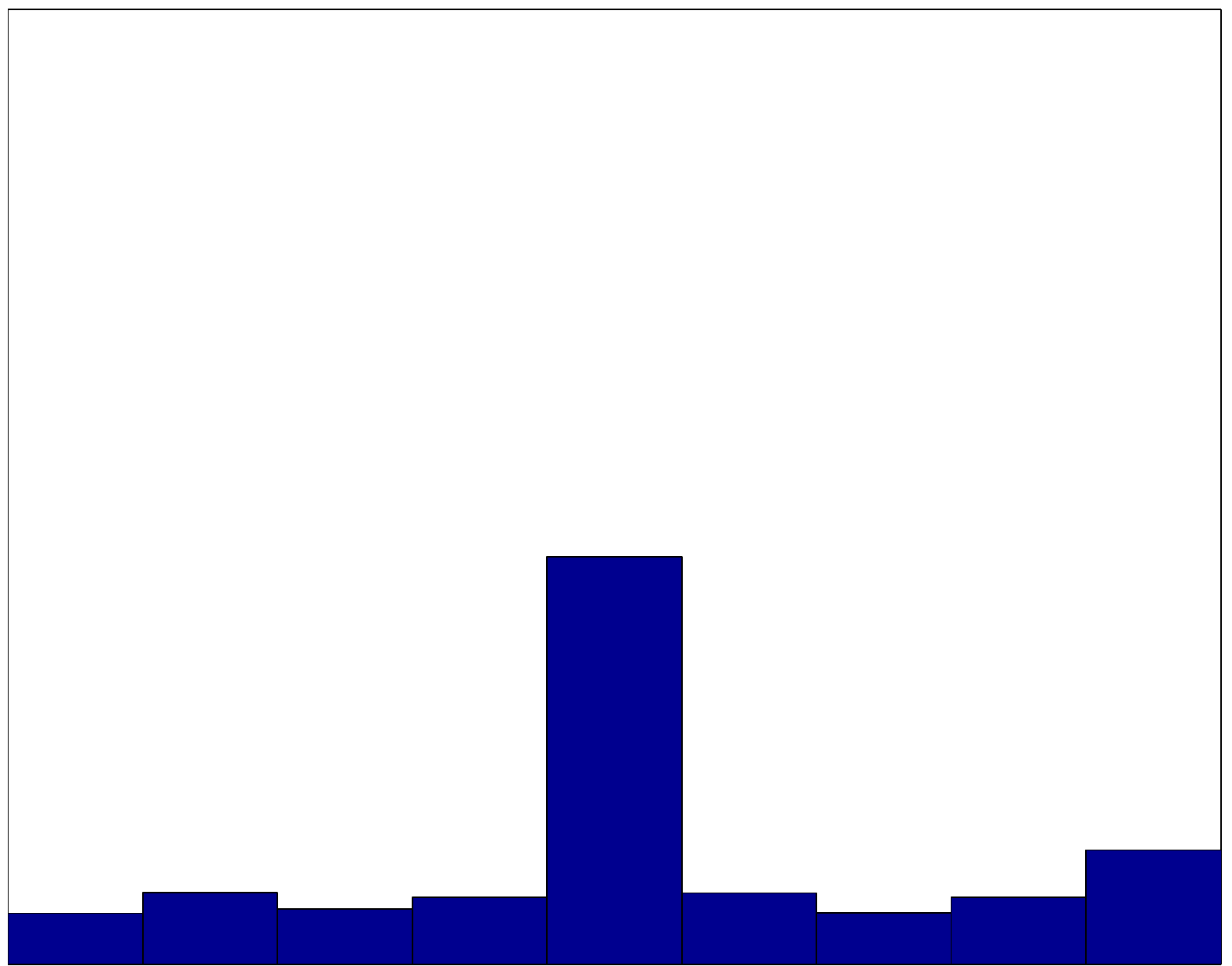}\includegraphics[width=\bb\columnwidth]{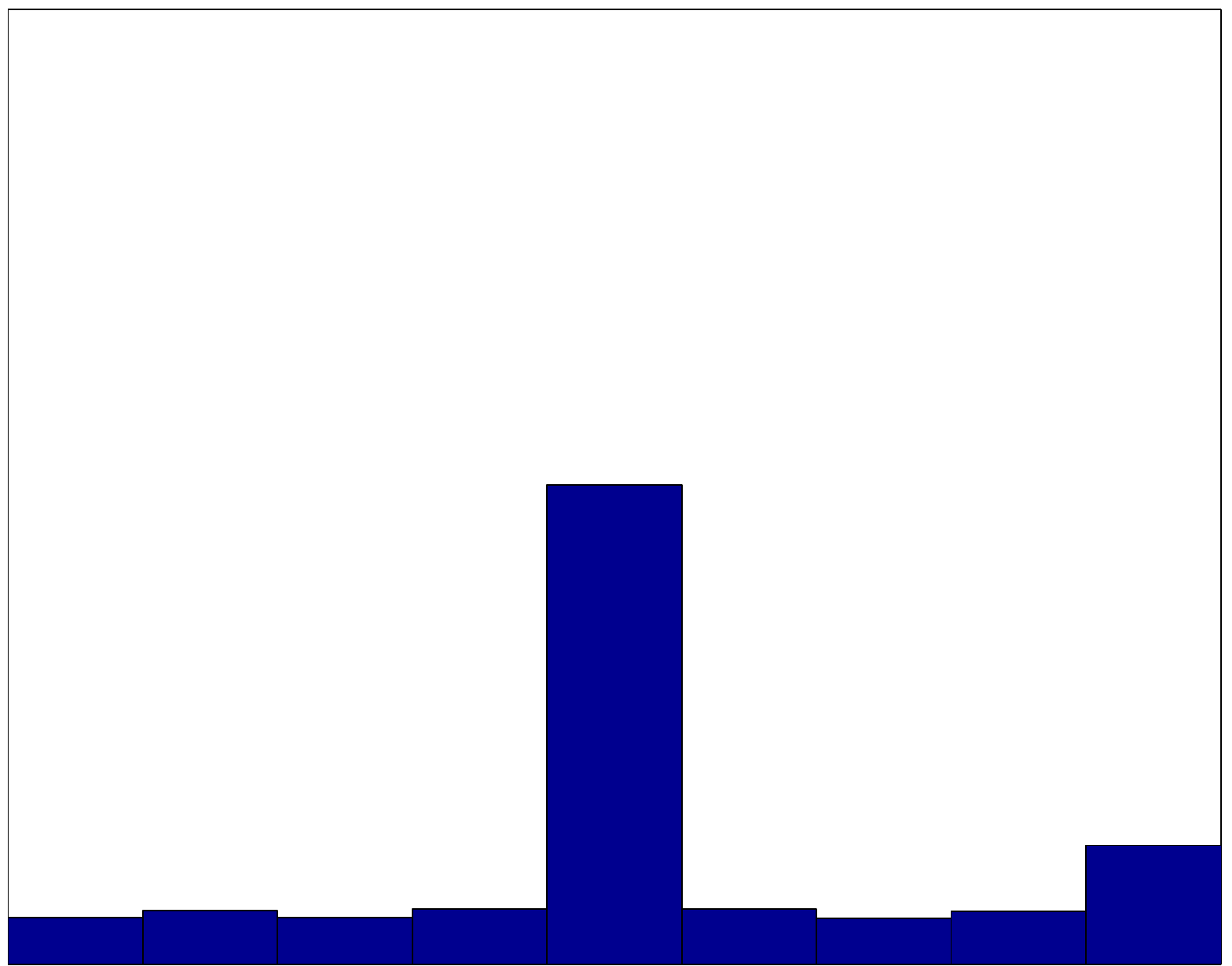}\includegraphics[width=\bb\columnwidth]{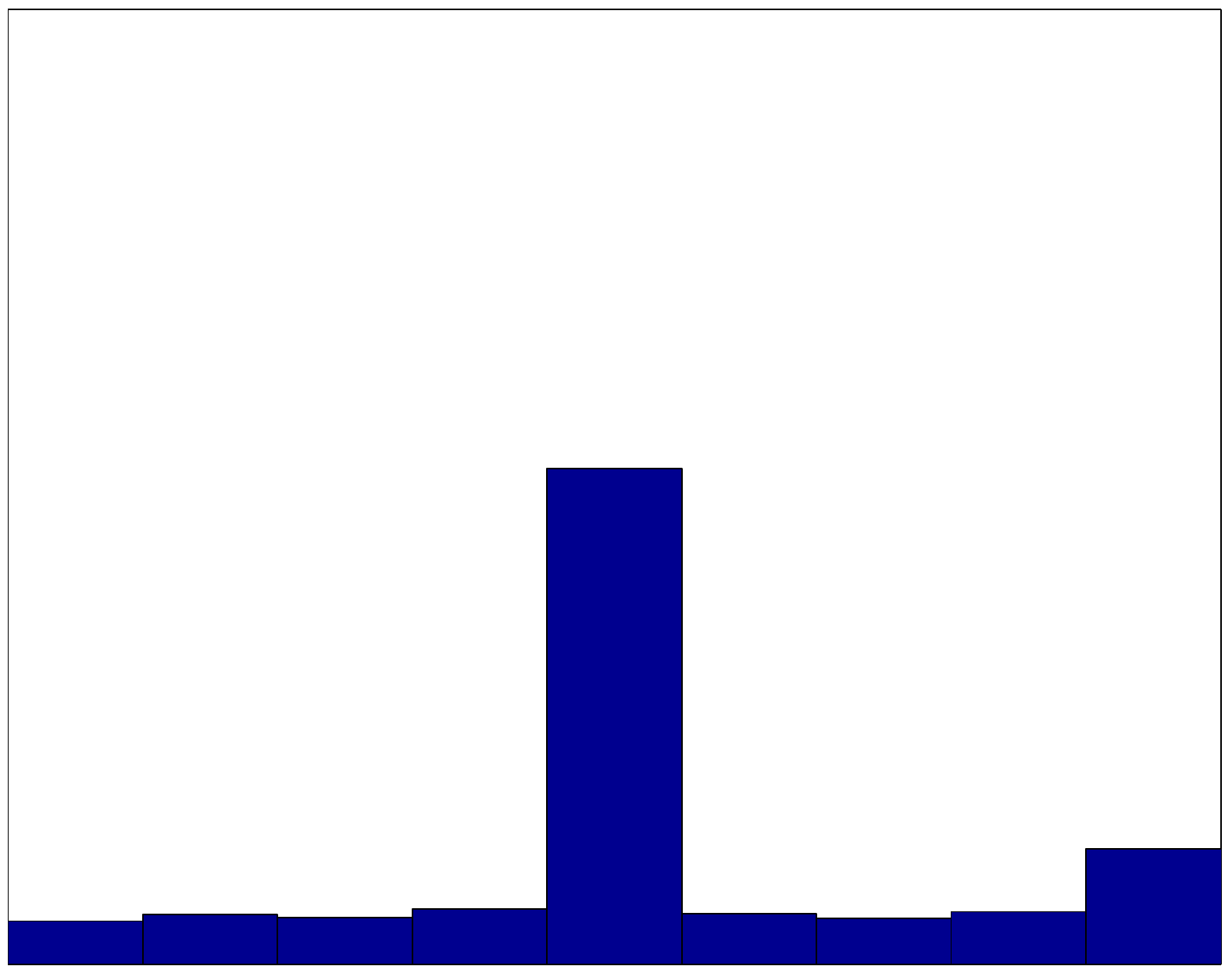}\includegraphics[width=\bb\columnwidth]{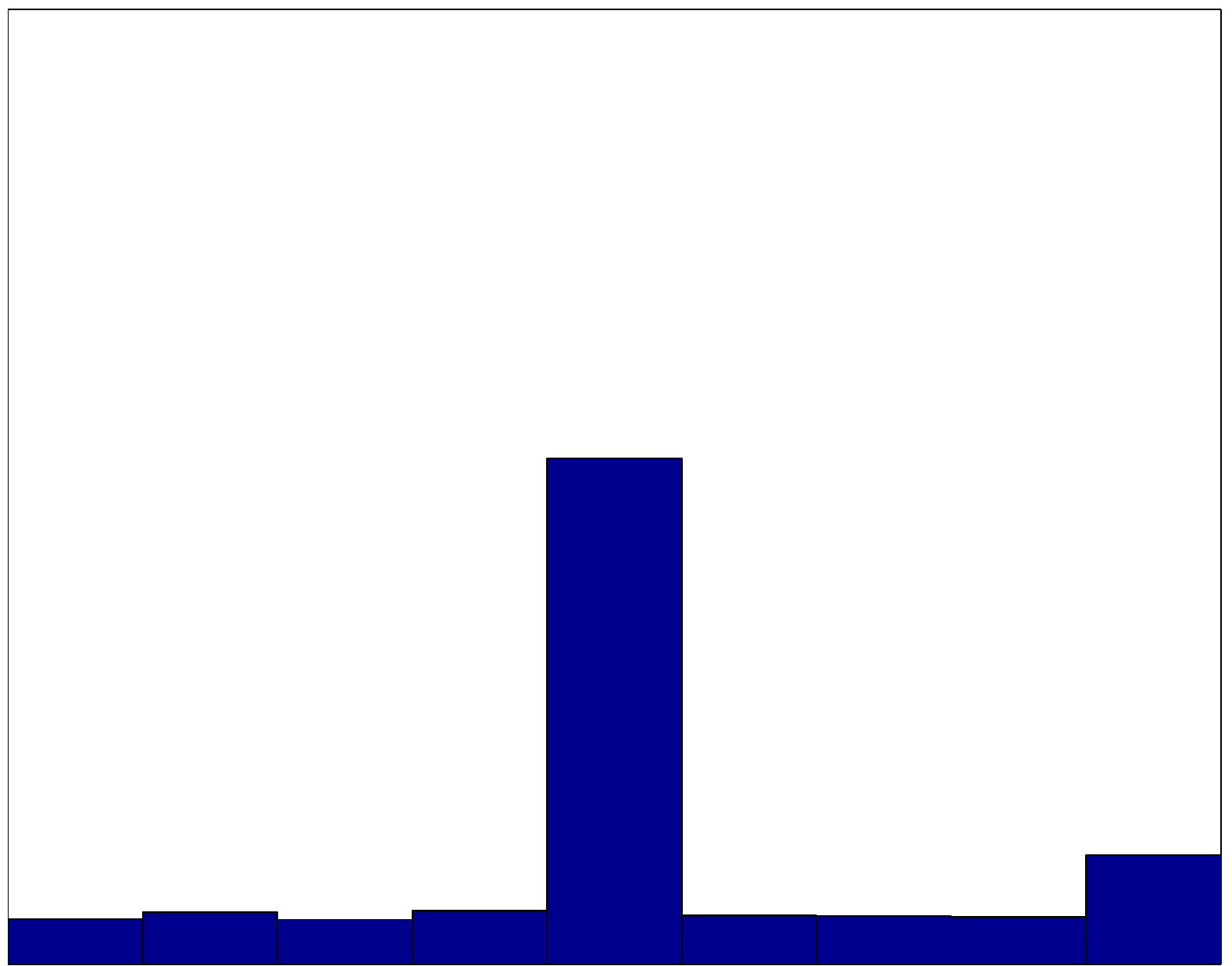}\\\includegraphics[width=\bb\columnwidth]{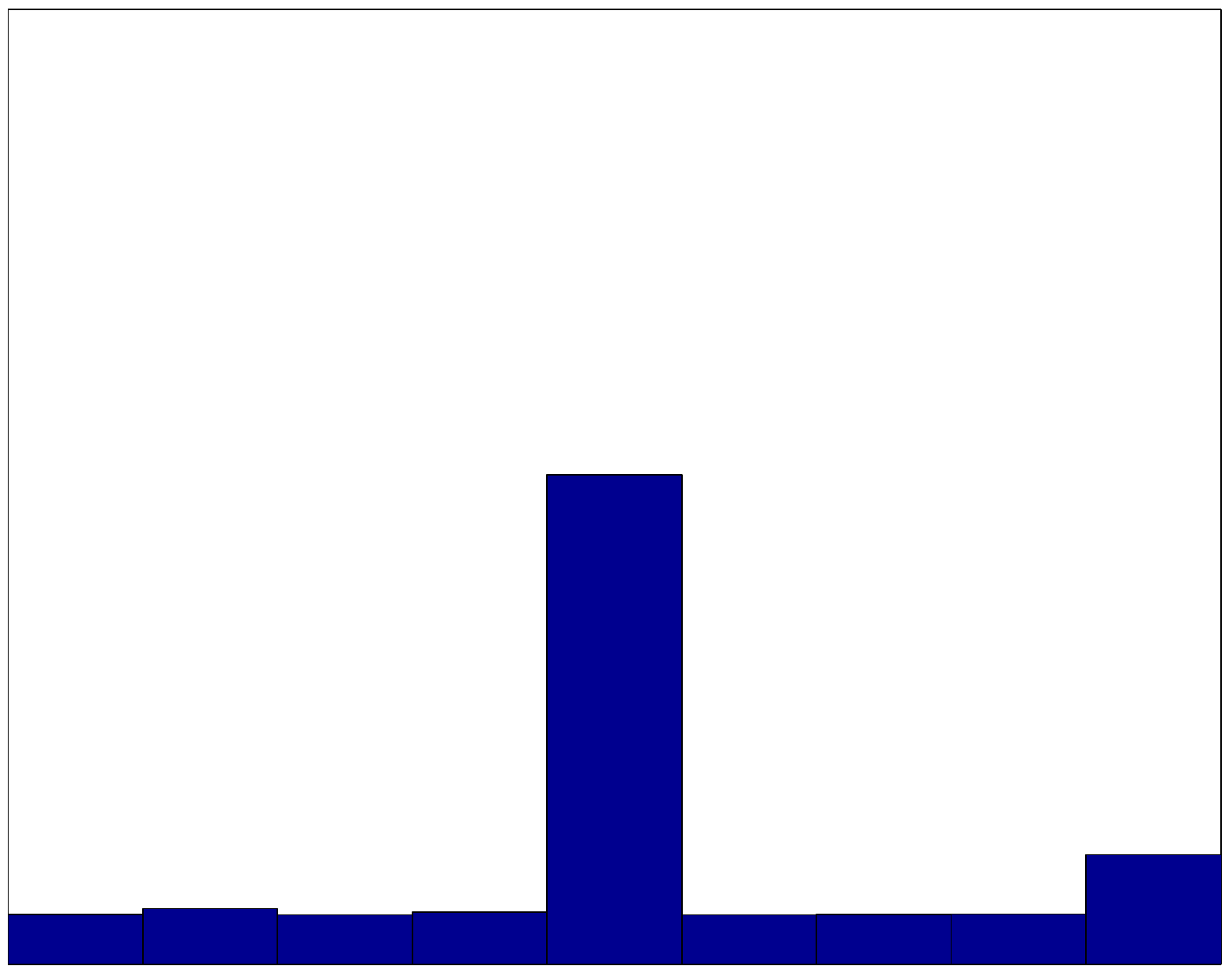}\includegraphics[width=\bb\columnwidth]{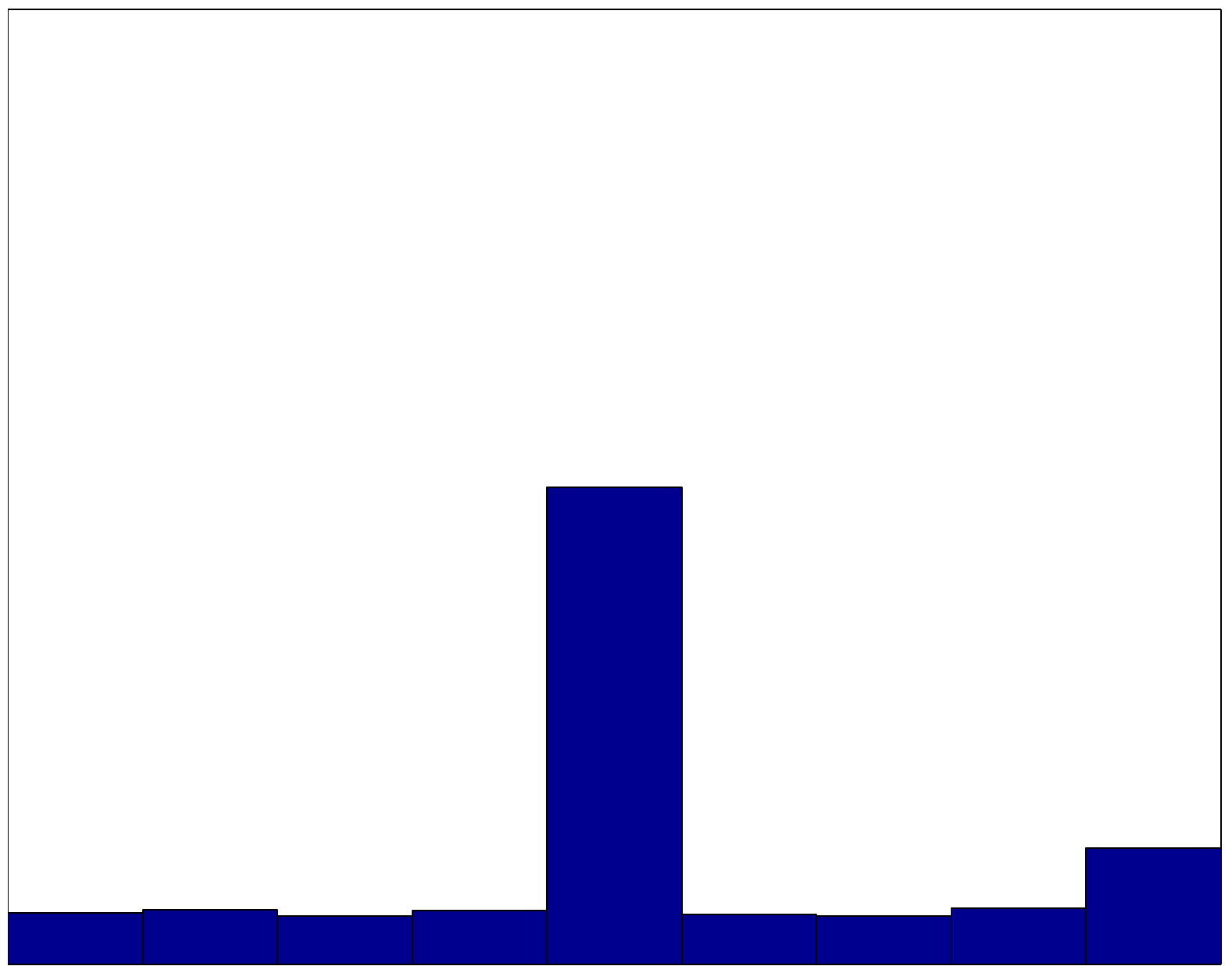}\includegraphics[width=\bb\columnwidth]{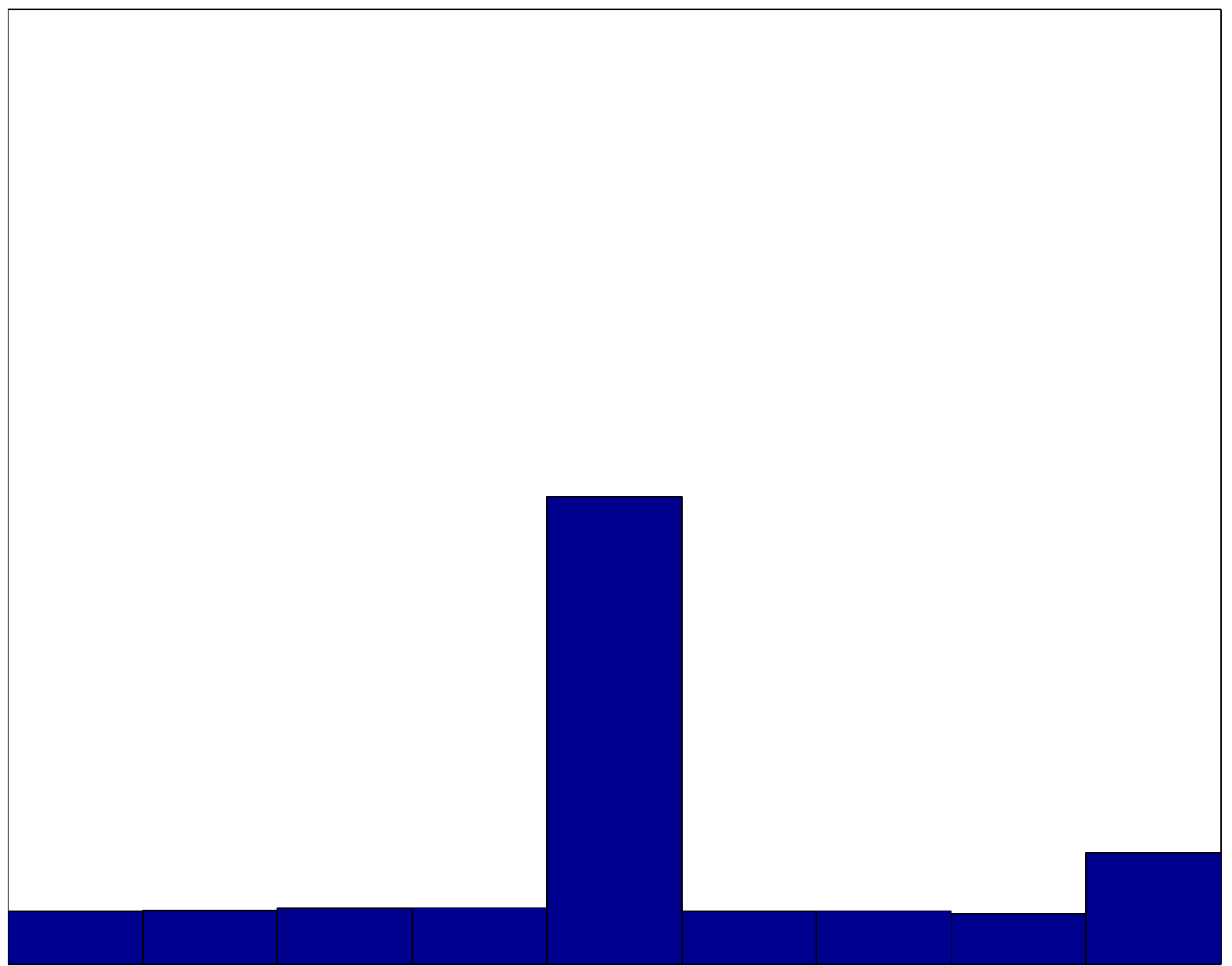}\includegraphics[width=\bb\columnwidth]{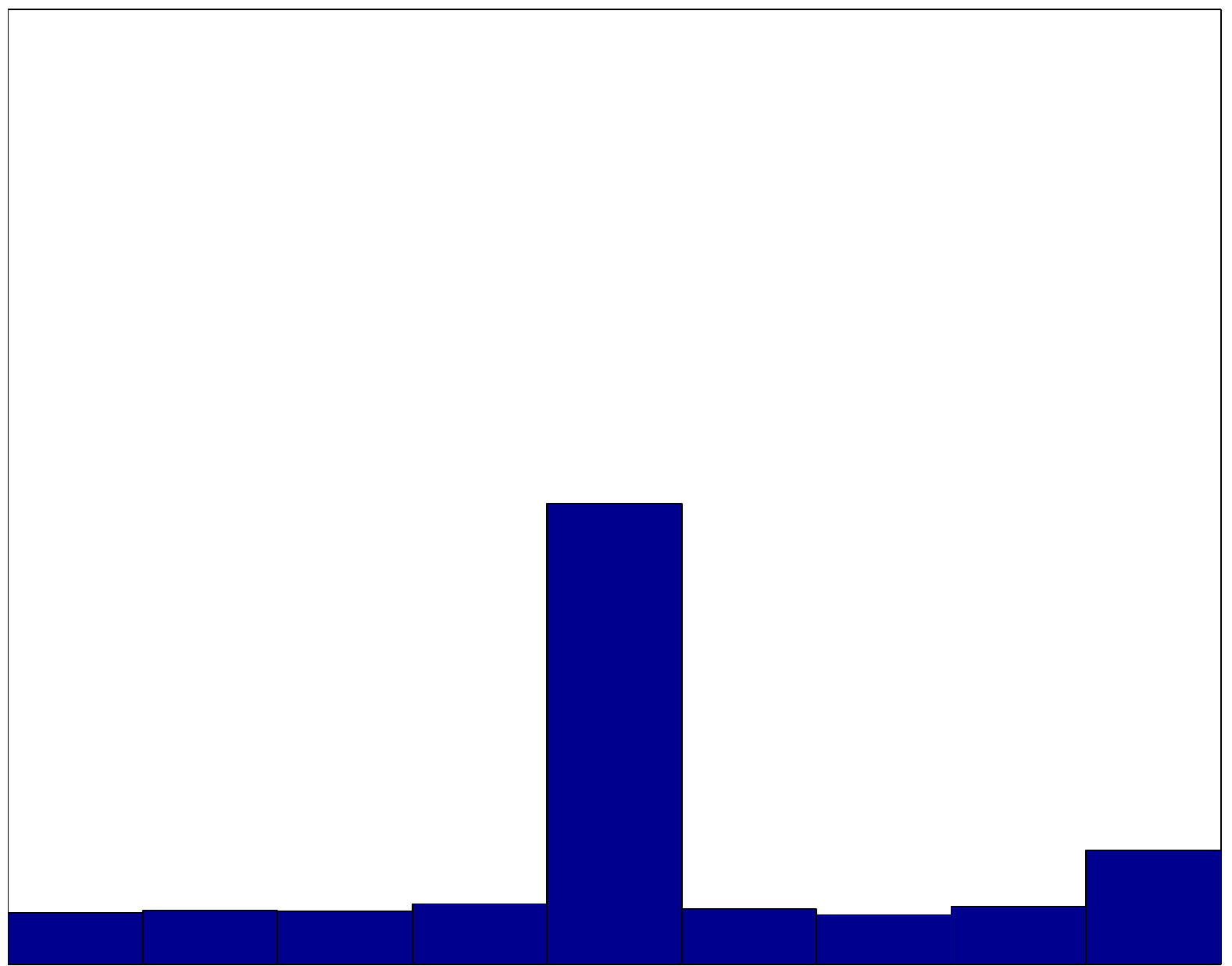}\includegraphics[width=\bb\columnwidth]{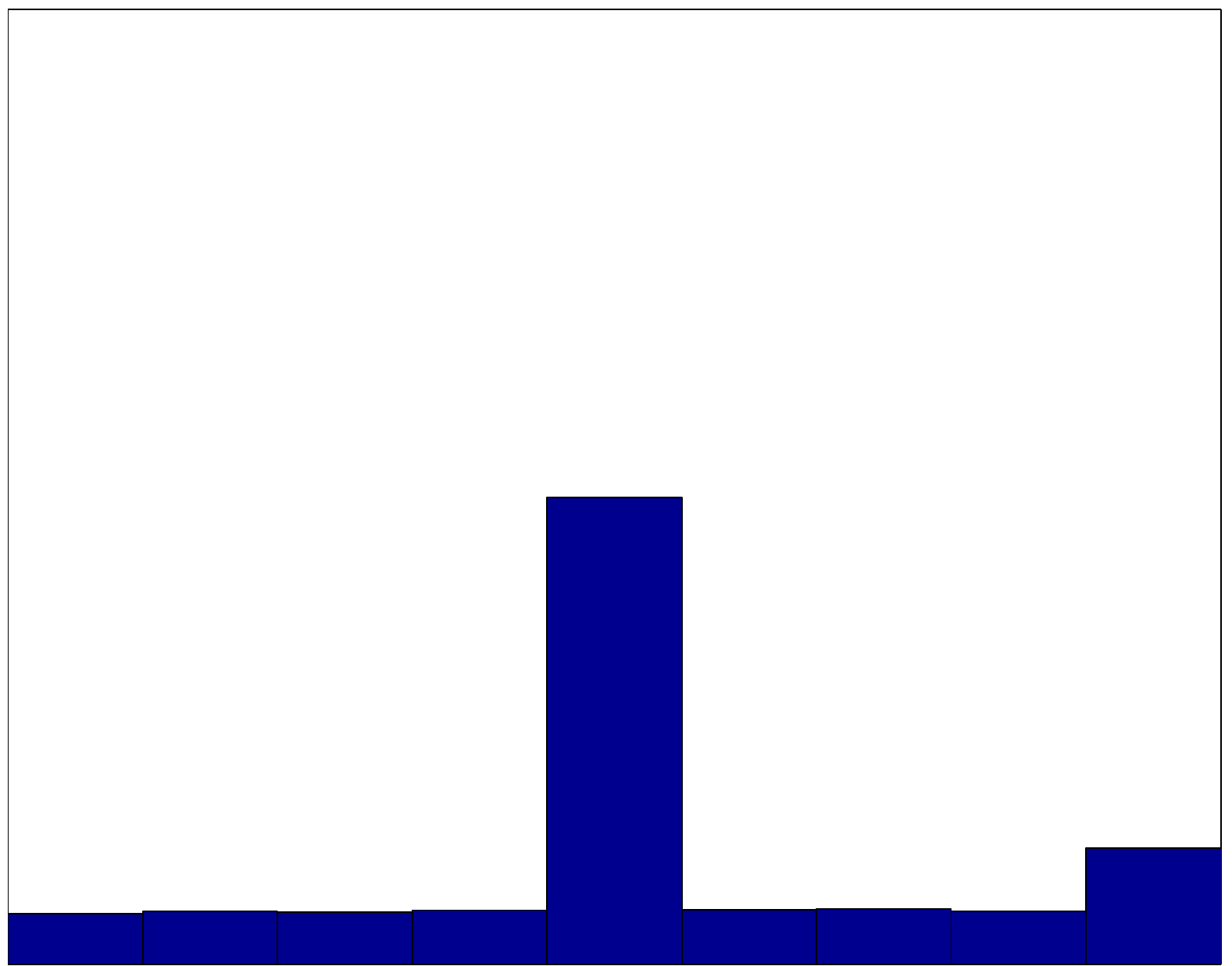}\caption{Histogram
  plots for the estimated shift and low SNR. From
left to right, top to bottom, $m=1,\dots,10$. The true shift was set
to 5 in all trials.}\label{fig:highSNR}\end{figure}

\begin{figure}[h!]
\includegraphics[width=\bb\columnwidth]{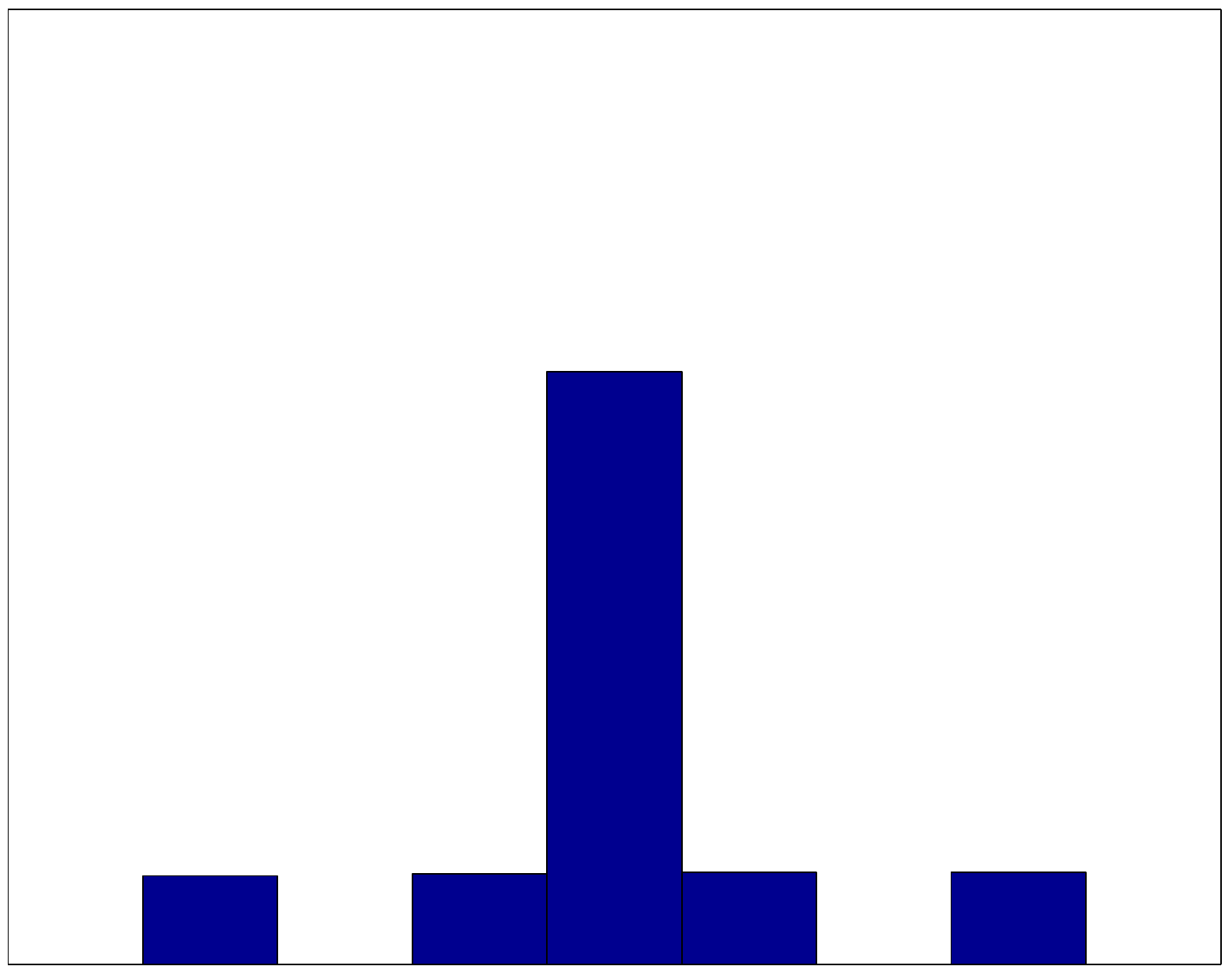}\includegraphics[width=\bb\columnwidth]{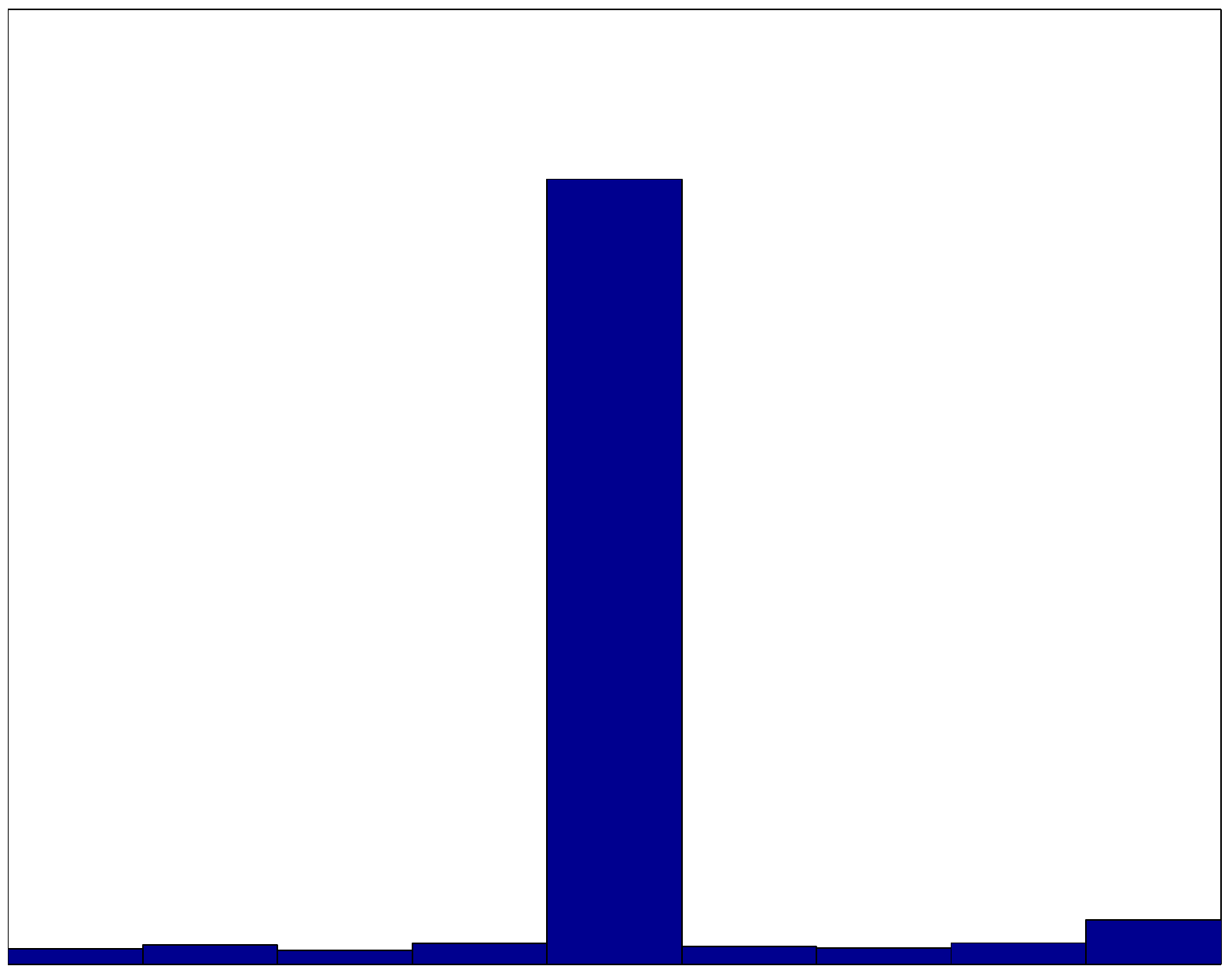}\includegraphics[width=\bb\columnwidth]{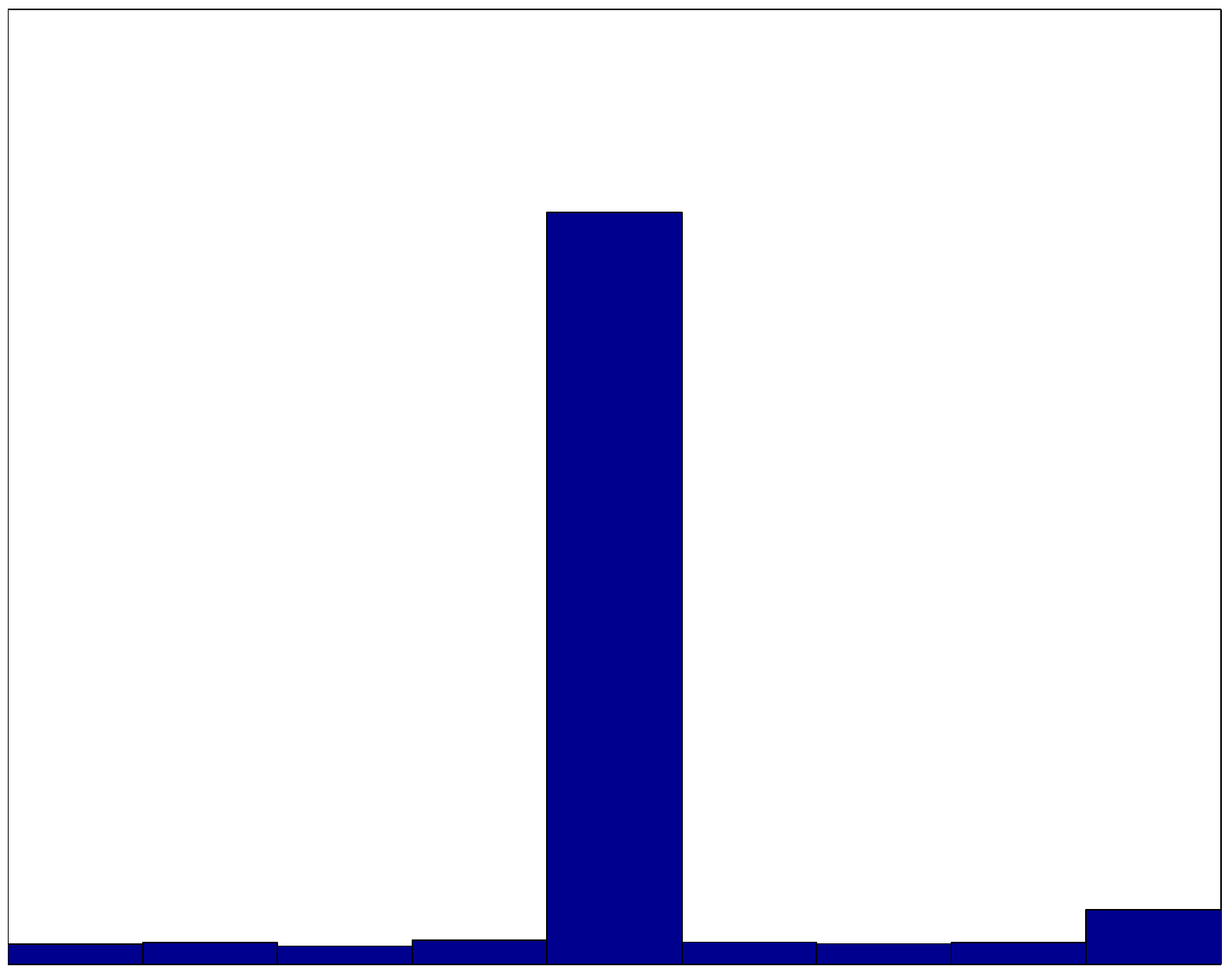}\includegraphics[width=\bb\columnwidth]{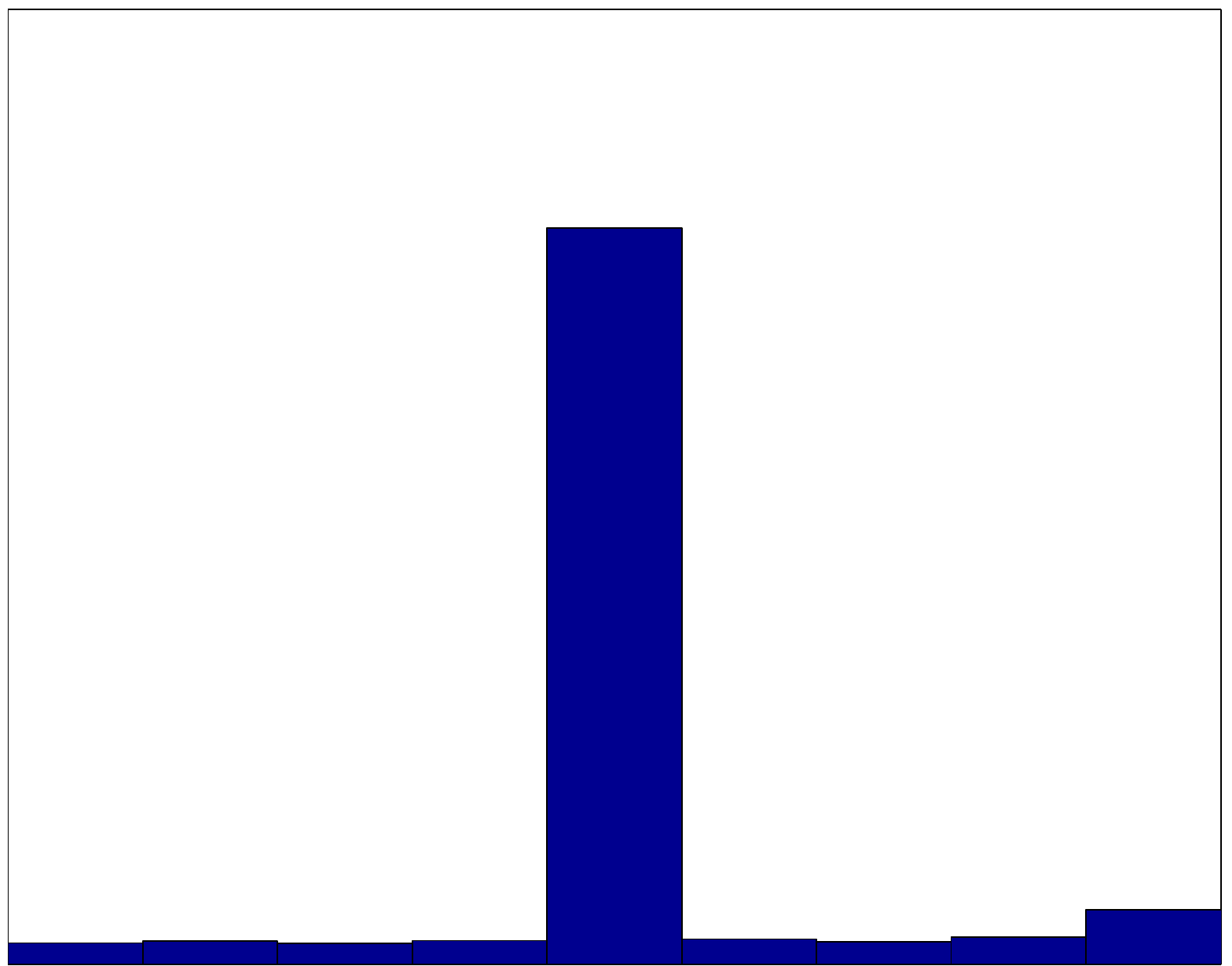}\includegraphics[width=\bb\columnwidth]{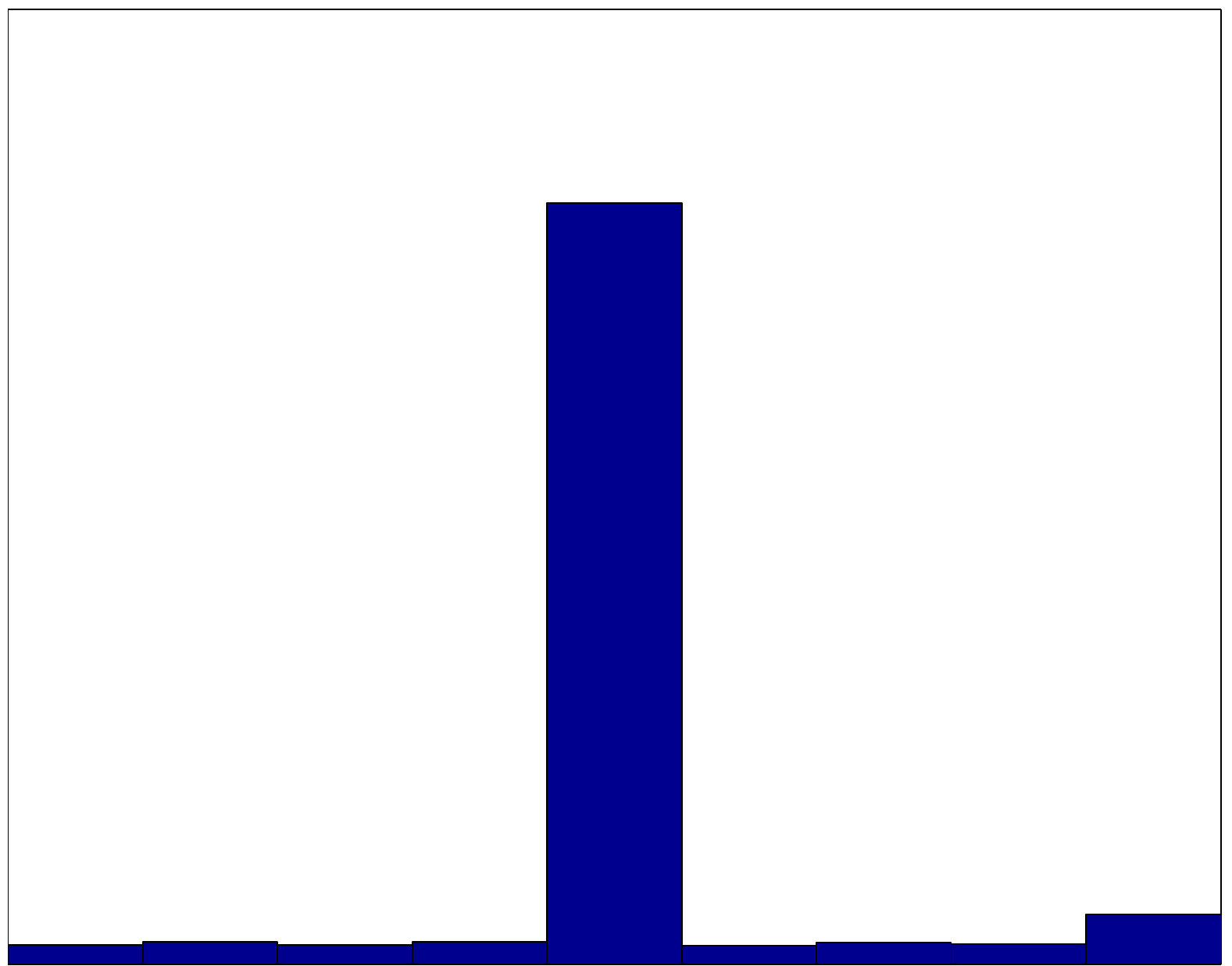}\\\includegraphics[width=\bb\columnwidth]{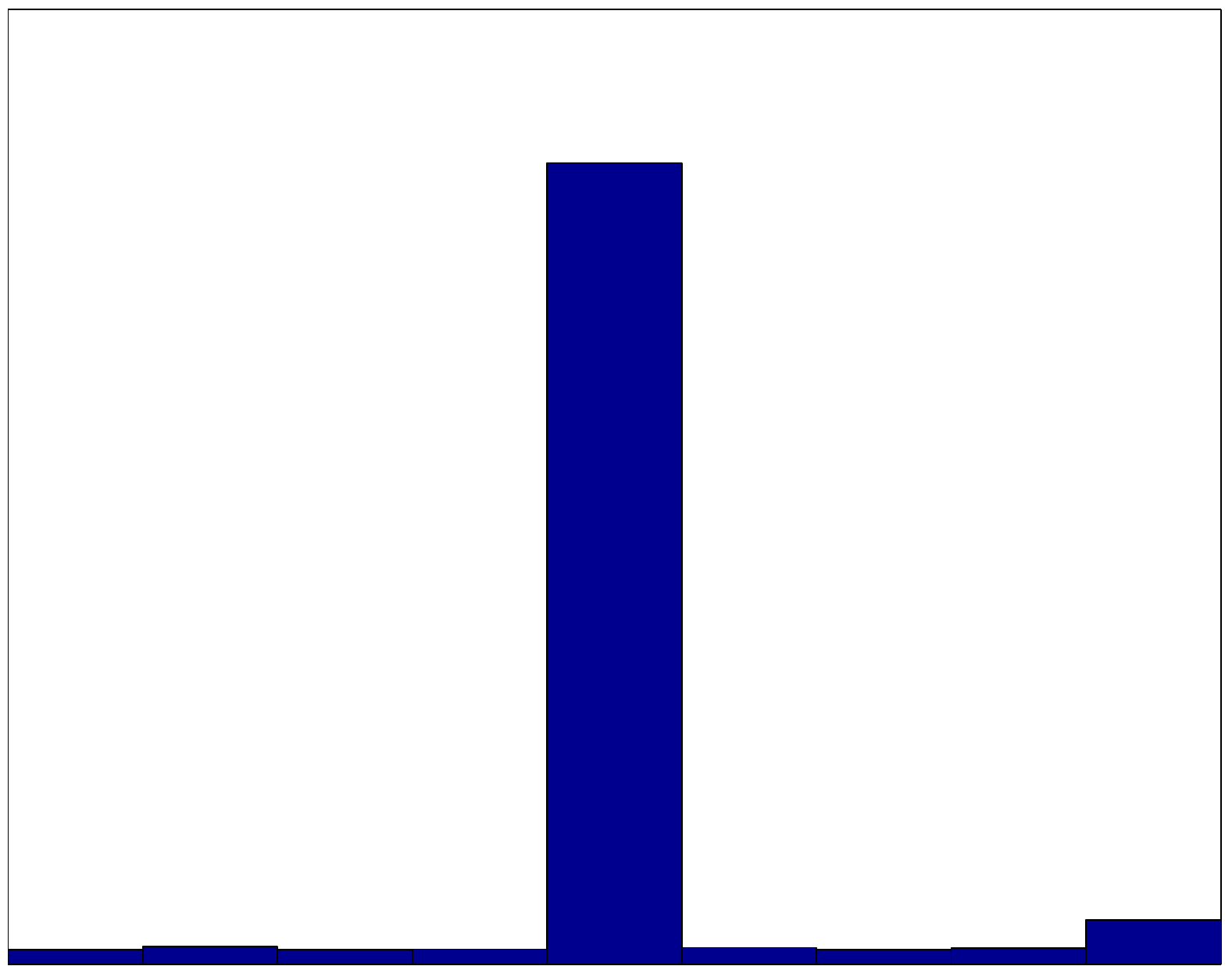}\includegraphics[width=\bb\columnwidth]{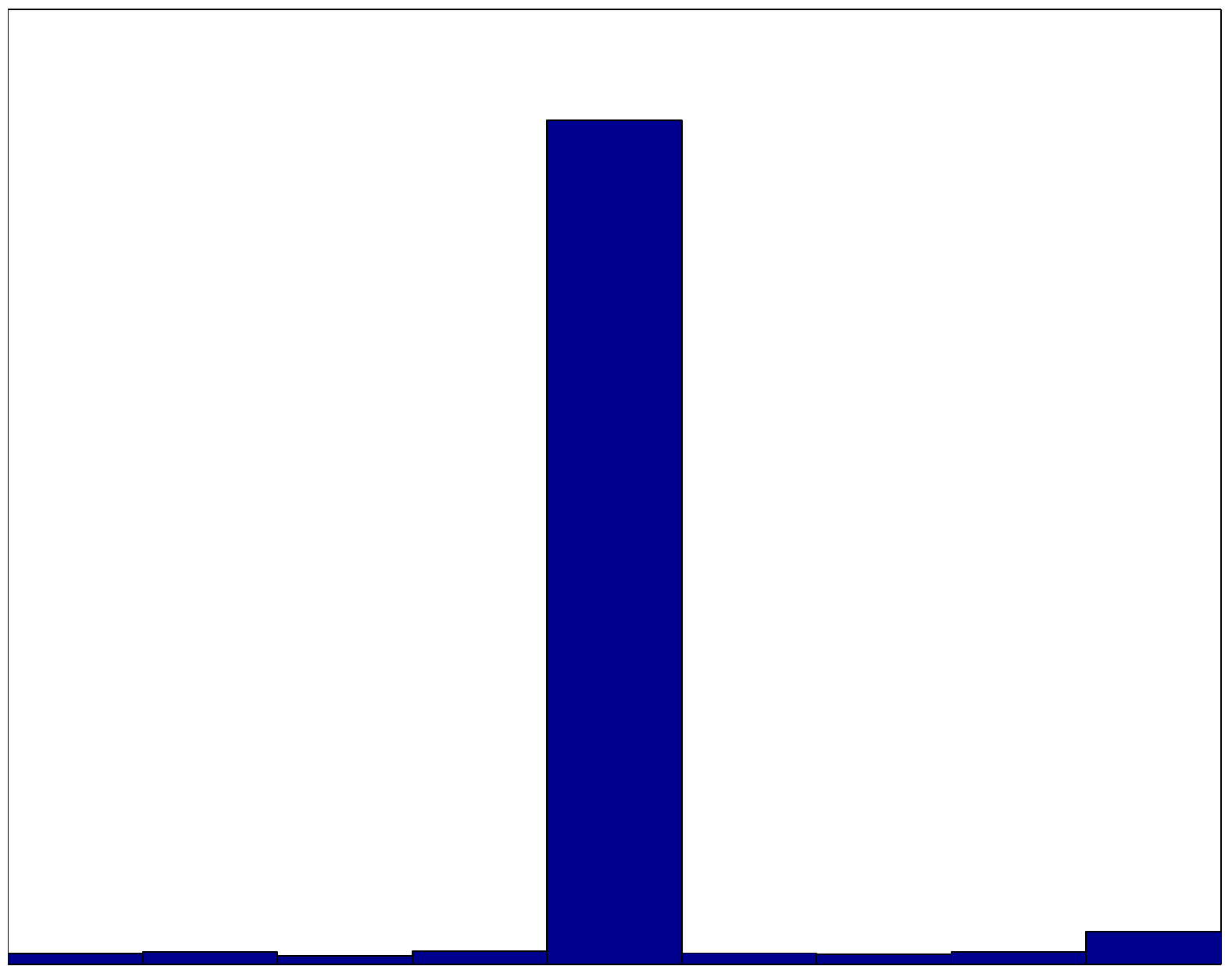}\includegraphics[width=\bb\columnwidth]{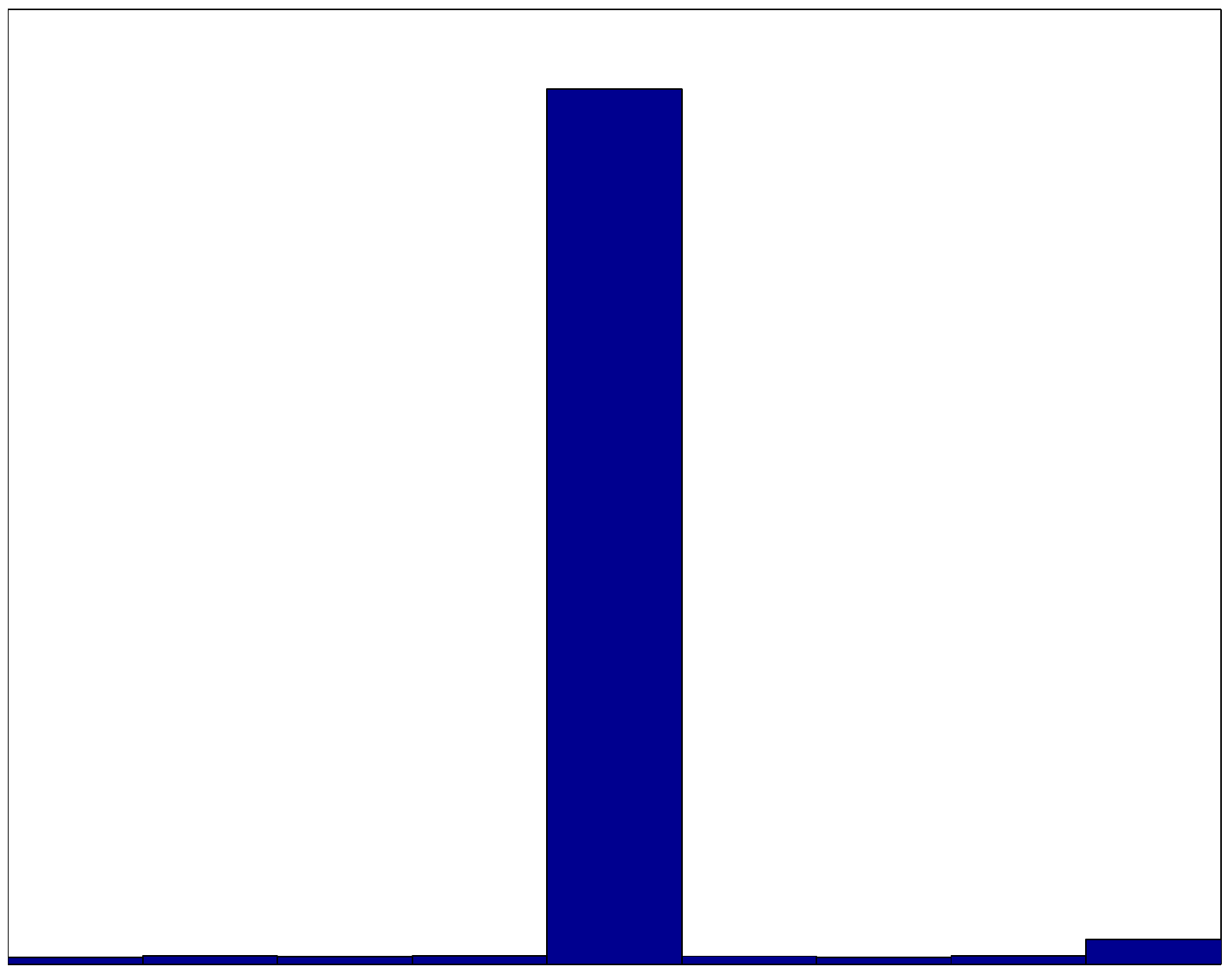}\includegraphics[width=\bb\columnwidth]{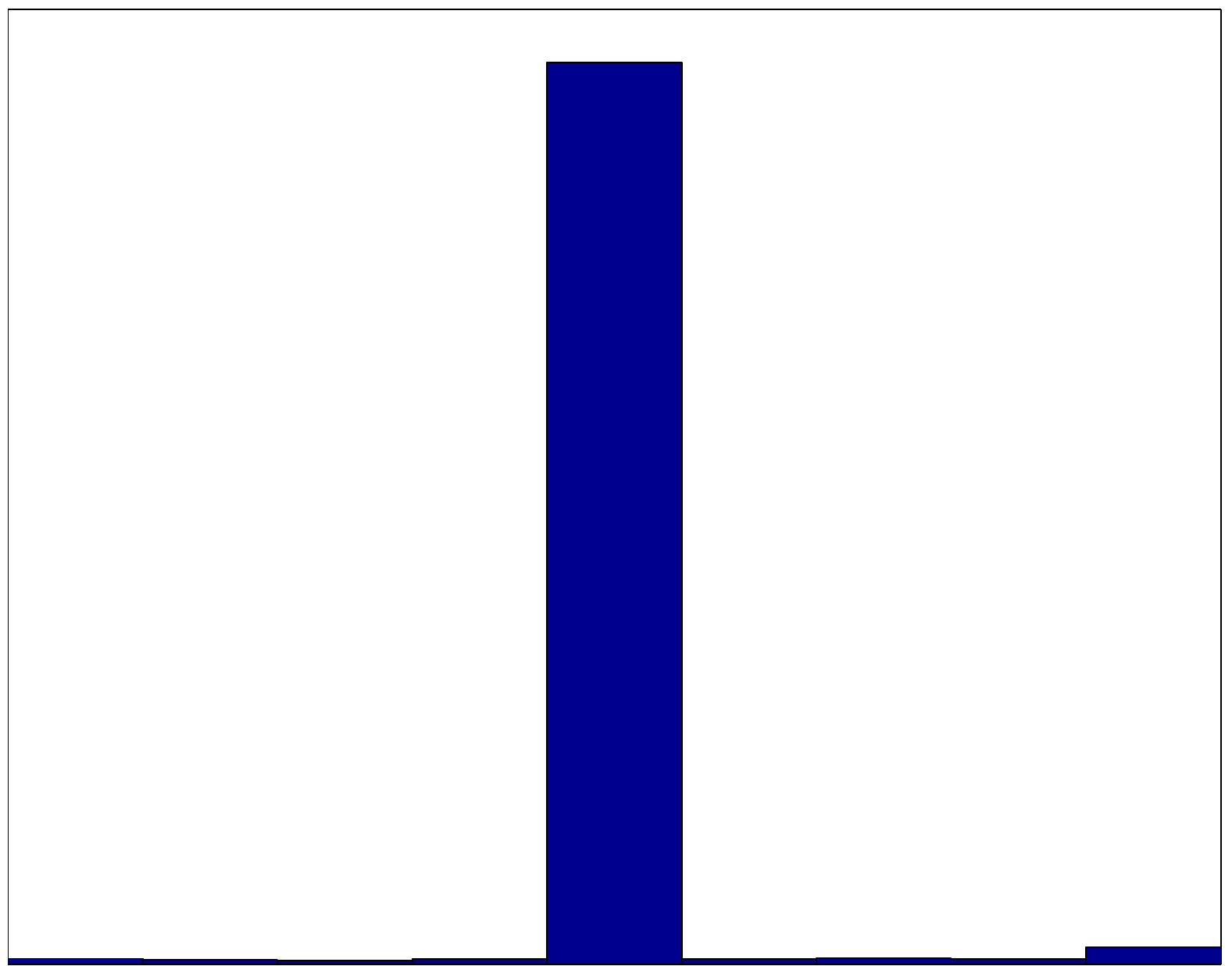}\includegraphics[width=\bb\columnwidth]{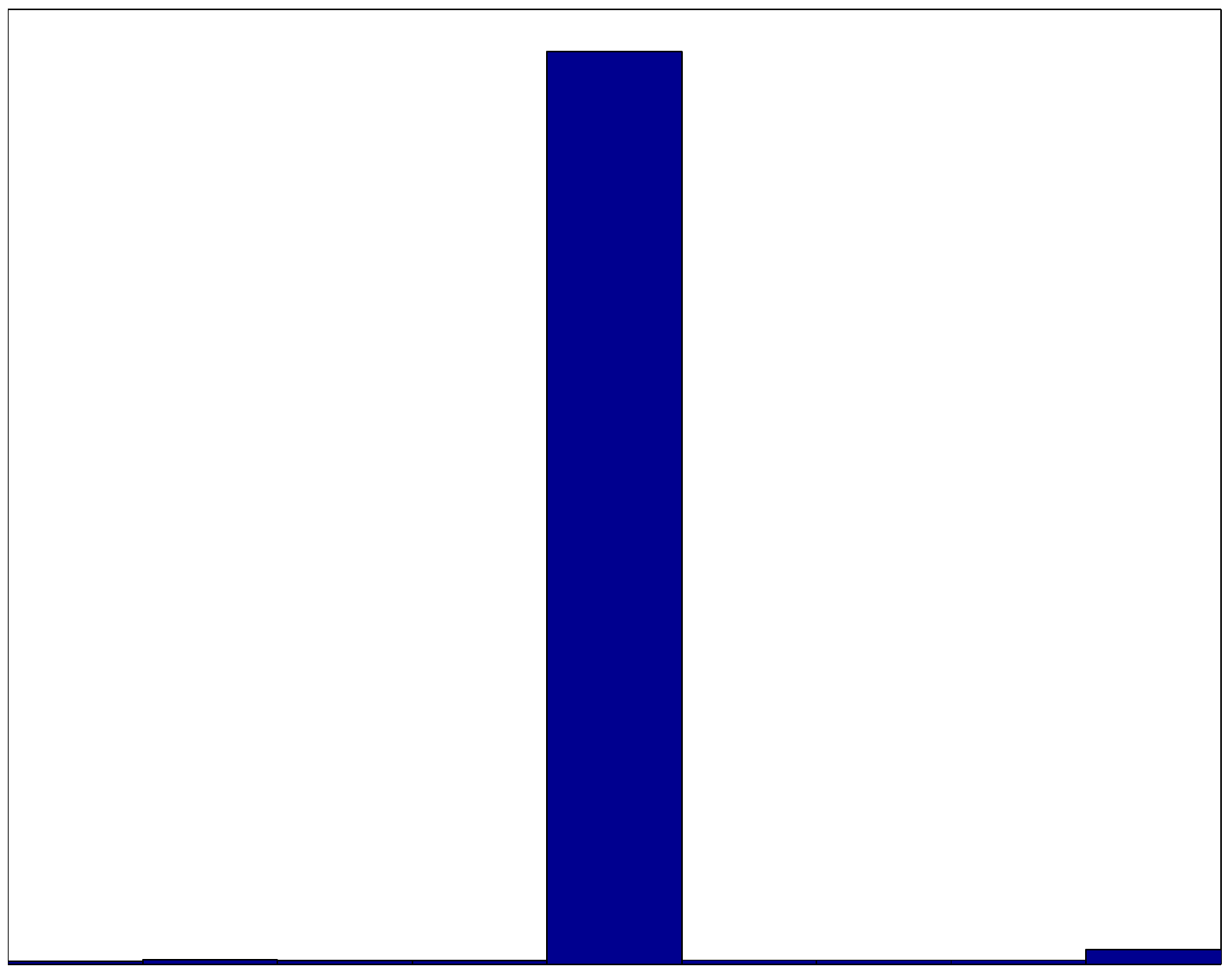}\caption{Histogram
  plots for the estimated shift and high SNR. From
left to right, top to bottom, $m=1,\dots,10$. The true shift was set
to 5 in all trials.}\label{fig:lowSNR}\end{figure}

%\end{exa}
%%%%%%%%%%%%%%%%%%%%%%%%%%%%%%%%%%%%%%%%%%%%%%%%%%%%%%%

Theorem \ref{thm:noisyrecov} gives conditions for when the noise does not affect the
estimate of the shift. This is a good property but even better would
be if the recovery of the true shift could be guaranteed. This is given by
the following corollary.
%%%%%%%%%%%%%%%%%%%%%%%%%%%%%%%%%%%%%%%%%%%%%%%%%%%%%% 

\begin{corr}[\bf Recovery of the True Shift from Noisy Low-Rate Data]\label{cor:truerecnoise}
If the $\ell_2$-norm difference between any two columns of $\AA  \tilde \XX $ is
greater than $2\|\ee_v \|_2$
and the conditions of Theorem \ref{thm:noisyrecov} are fulfilled, then
\eqref{eq:simptest} recovers the true shift.
\end{corr}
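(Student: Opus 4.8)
The plan is to combine Theorem~\ref{thm:noisyrecov}, which neutralises the effect of the additive noise on the estimate, with a short triangle-inequality argument that identifies the resulting noise-free estimate as the true shift $l$. First I would record three facts that hold because $\AA$ is a partial Fourier matrix. By Lemma~\ref{lem:com} and $\AA\AA^*=\II$, the map $\bar\DD^s=\AA\DD^s\AA^*$ is an isometry, so $\|\bar\DD^s\ee_v\|_2=\|\ee_v\|_2$ for every $s$; consequently \eqref{eq:simptest} is equivalent to $\min_s\|\tilde\zz-\bar\DD^s\tilde\vv\|_2$, and (taking $\tilde\xx=\AA^*\tilde\vv$, so that $\AA^*\AA\tilde\xx=\tilde\xx$) the $s$th column of $\AA\tilde\XX$ equals $\AA\DD^s\tilde\xx=\bar\DD^s\tilde\vv$. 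Finally, the same computation underlying Corollary~\ref{cor:first} gives the exact clean relation $\zz=\bar\DD^l\vv$. Thus the separation hypotheses can be read as statements about the vectors $\bar\DD^s\tilde\vv$.

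Next I would invoke Theorem~\ref{thm:noisyrecov}. Since its hypotheses are assumed to hold, the estimate produced by \eqref{eq:simptest} is not affected by the noise; that is, the minimiser $\hat s$ of $\|\tilde\zz-\bar\DD^s\tilde\vv\|_2$ coincides with the minimiser $s^{\circ}=\argmin_s\|\zz-\bar\DD^s\tilde\vv\|_2$ obtained by scoring the \emph{noise-free} measurement $\zz$ against the same dictionary $\{\bar\DD^s\tilde\vv\}_s$. This is the only step that consumes the full strength of $\Delta_{\zz\vv}$, and it reduces the problem to showing $s^{\circ}=l$.

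To establish $s^{\circ}=l$ I would use the extra hypothesis only. From $\zz=\bar\DD^l\vv$ and the isometry of $\bar\DD^l$, the true shift attains $\|\zz-\bar\DD^l\tilde\vv\|_2=\|\bar\DD^l(\vv-\tilde\vv)\|_2=\|\ee_v\|_2$. For any competitor $s\neq l$, the triangle inequality gives $\|\zz-\bar\DD^s\tilde\vv\|_2\ge\|\bar\DD^l\tilde\vv-\bar\DD^s\tilde\vv\|_2-\|\zz-\bar\DD^l\tilde\vv\|_2$, and since the $\ell_2$-distance between any two columns of $\AA\tilde\XX$ exceeds $2\|\ee_v\|_2$ by hypothesis, this is bounded below by $2\|\ee_v\|_2-\|\ee_v\|_2=\|\ee_v\|_2=\|\zz-\bar\DD^l\tilde\vv\|_2$. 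Hence $l$ is the strict minimiser of $\|\zz-\bar\DD^s\tilde\vv\|_2$, so $s^{\circ}=l$, and combining with the previous step yields $\hat s=l$; that is, \eqref{eq:simptest} recovers the true shift.

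The main obstacle I anticipate lies in the invocation of Theorem~\ref{thm:noisyrecov}: making precise what ``the estimate is not affected by the noise'' licenses, namely that one may replace $\tilde\zz$ by the clean $\zz$ without changing the arg-min while retaining the noisy dictionary $\bar\DD^s\tilde\vv$. This is precisely where the improvement over the naive constant $2(\|\ee_z\|_2+\|\ee_v\|_2)$ comes from, since a direct bound on $\|\tilde\zz-\bar\DD^s\tilde\vv\|_2$ at the true shift would also pick up $\|\ee_z\|_2$; Theorem~\ref{thm:noisyrecov} is what removes the $\ee_z$ contribution and leaves only the clean quantity controlled by the isometry $\|\bar\DD^s\ee_v\|_2=\|\ee_v\|_2$. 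Once that reduction is granted the remainder is routine bookkeeping, the only quantitative input being the triangle inequality that converts the hypothesised column separation $2\|\ee_v\|_2$ into a strict gap at $s=l$.
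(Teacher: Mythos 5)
Your setup facts and your final triangle-inequality computation are both correct, but the pivotal reduction --- the invocation of Theorem~\ref{thm:noisyrecov} --- rests on a reading of that theorem which the paper does not support, and which you yourself flag as the main obstacle. Theorem~\ref{thm:noisyrecov} does not say that the noisy minimizer $\hat s$ of $\|\tilde\zz-\bar\DD^s\tilde\vv\|_2$ coincides with the hybrid minimizer $s^{\circ}=\argmin_s\|\zz-\bar\DD^s\tilde\vv\|_2$ (clean measurement scored against the \emph{noisy} dictionary). Its conclusion, as the paper itself clarifies in the discussion following the theorem and as its proof establishes, is that $\hat s$ coincides with the estimate obtained when \emph{both} measurements are noise-free, i.e. with the minimizer of $\|\zz-\AA\XX\qq\|_2$ over $1$-sparse binary $\qq$ --- clean $\zz$ against the \emph{clean} dictionary $\{\bar\DD^s\vv\}_s$. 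With this correct reading your last step no longer closes the argument: knowing that the noisy estimate equals the clean-versus-clean estimate, you still must show that this clean estimate is the true shift, and that requires the columns of $\AA\XX$ (not of $\AA\tilde\XX$) to be distinct, so that Lemma~\ref{lem:first}/Theorem~\ref{thm:first} applies. That is exactly the paper's route: it uses the same $2\|\ee_v\|_2$ hypothesis and the same triangle inequality, but in the form $\|\AA\tilde\XX(\hat\qq-\tilde\qq)\|_2\le\|\AA\XX(\hat\qq-\tilde\qq)\|_2+\|\AA\HH(\hat\qq-\tilde\qq)\|_2\le\|\AA\XX(\hat\qq-\tilde\qq)\|_2+2\|\ee_v\|_2$ with $\HH=\tilde\XX-\XX$, to conclude that the clean columns are distinct, and then chains Theorem~\ref{thm:noisyrecov} with Theorem~\ref{thm:first}.

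Your hybrid claim does happen to be true under the corollary's hypotheses, but proving it requires reopening the proof of Theorem~\ref{thm:noisyrecov} rather than citing its statement: there the noisy minimizer $\hat\qq$ is shown to equal a zero-loss solution $\qq_0$ of the clean problem, i.e. $\zz=\bar\DD^{\hat s}\vv$ exactly; from this $\|\zz-\bar\DD^{\hat s}\tilde\vv\|_2=\|\ee_v\|_2$, and your strict-gap computation (every $s\neq l$ has hybrid value strictly larger than $\|\ee_v\|_2$, which $l$ attains) then forces $\hat s=l$. If you add that extraction step, your argument becomes a correct and arguably more self-contained alternative to the paper's proof, since it bypasses Theorem~\ref{thm:first} entirely; as written, however, the step you lean on is not licensed by the theorem you cite.
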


%%%%%%%%%%%%%%%%%%%%%%%%%%%%%%%%%%%%%%%%%%%%%%%%%%%%%%%

If the estimate of the shift has been computed,  a less conservative
test can be used to check if the computed estimate has been affected
by noise and if it is the true one. We summarize our conclusion in the
following corollary.
\begin{corr}[\bf Test for True Shift in the Presence of Noise]\label{corr:testnoise}
Assume that \eqref{eq:simptest}  gives  $s^*$ as an estimate of the
shift. If the $\ell_2$ difference between any  column and the
$s^*$-column of $\AA  \tilde \XX $ is
greater than $2\|\ee_v \|_2$ and $\Delta_{\zz\vv}$, then
$s^*$ is the true shift.
\end{corr}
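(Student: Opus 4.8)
The plan is to adapt the localization idea that turned Theorem~\ref{thm:first} into Corollary~\ref{cor:test}: rather than separating \emph{all} pairs of columns of $\AA\tilde\XX$, I would only invoke the hypothesis that the reported column $s^*$ is well separated from every other column, and show that this already forces $s^*$ to equal the true shift $l$. First I would recast the test. Since $\AA$ is a (normalized) partial Fourier matrix, Lemma~\ref{lem:com} together with $\AA\AA^*=\II$ gives $\|\bar \DD^s\tilde\vv\|_2=\|\tilde\vv\|_2$ for every $s$, so that maximizing $\Re\{\langle\tilde\zz,\bar \DD^s\tilde\vv\rangle\}$ in \eqref{eq:simptest} is the same as selecting the column $(\AA\tilde\XX)_s=\bar \DD^s\tilde\vv$ closest to $\tilde\zz$ in $\ell_2$. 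In particular, the term under the square root in $\Delta_{\zz\vv}$ equals the achieved minimal residual $R\triangleq\|\tilde\zz-(\AA\tilde\XX)_{s^*}\|_2=\min_s\|\tilde\zz-\bar \DD^s\tilde\vv\|_2$.

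Next I would pin down the (unknown) true-shift column. Because $\yy=\DD^l\xx$, the same commutation identity together with $\AA\AA^*=\II$ yields $\bar \DD^l\vv=\AA\DD^l\AA^*\AA\xx=\AA\AA^*\AA\DD^l\xx=\AA\DD^l\xx=\AA\yy=\zz$, so the $l$-th noise-free column is exactly $\zz$. Writing $\tilde\zz=\zz+\ee_z$ and $(\AA\tilde\XX)_l=\bar \DD^l\tilde\vv=\zz+\bar \DD^l\ee_v$, and using the isometry $\|\bar \DD^l\ee_v\|_2=\|\ee_v\|_2$ (established exactly like the constant-norm identity above), the residual at the true shift obeys $\|\tilde\zz-(\AA\tilde\XX)_l\|_2=\|\ee_z-\bar \DD^l\ee_v\|_2\le\|\ee_z\|_2+\|\ee_v\|_2$.

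The core step is a single triangle inequality. Suppose, for contradiction, that $s^*\neq l$. Then, since $s^*$ is the minimizer,
\begin{equation}\nonumber
\|(\AA\tilde\XX)_{s^*}-(\AA\tilde\XX)_l\|_2\le\|\tilde\zz-(\AA\tilde\XX)_{s^*}\|_2+\|\tilde\zz-(\AA\tilde\XX)_l\|_2\le R+\|\ee_z\|_2+\|\ee_v\|_2=\Delta_{\zz\vv}.
\end{equation}
But $l\neq s^*$, so the hypothesis that the $s^*$-column lies farther than $\Delta_{\zz\vv}$ from every other column gives $\|(\AA\tilde\XX)_l-(\AA\tilde\XX)_{s^*}\|_2>\Delta_{\zz\vv}$, a contradiction. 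Hence $s^*=l$.

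This route shows that the localized $\Delta_{\zz\vv}$ separation alone already certifies $s^*=l$; the companion $2\|\ee_v\|_2$ condition is what supplies the Corollary~\ref{cor:truerecnoise}-style reading, in which one instead argues that the noise-free $s^*$-column is distinct from all others (the columns of $\AA\tilde\XX$ and $\AA\XX$ differ by exactly $\|\ee_v\|_2$ in $\ell_2$, so $\|(\AA\XX)_{s^*}-(\AA\XX)_j\|_2>2\|\ee_v\|_2-2\|\ee_v\|_2=0$), whence the noise-free test uniquely selects the true-shift column and Theorem~\ref{thm:noisyrecov} localized to $s^*$ transfers this to the noisy estimate. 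I expect the only real work to be the bookkeeping behind the constant-norm and isometry identities $\|\bar \DD^s\tilde\vv\|_2=\|\tilde\vv\|_2$ and $\|\bar \DD^s\ee_v\|_2=\|\ee_v\|_2$ (immediate from Lemma~\ref{lem:com} and $\AA\AA^*=\II$), together with the one conceptual point that the localized separation ``$s^*$ versus all $j$'' silently covers the unknown true index $l$ --- exactly the device that let Corollary~\ref{cor:test} weaken Theorem~\ref{thm:first}.
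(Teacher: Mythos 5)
Your proof is correct, but it takes a genuinely different route from the paper's. The paper proves this corollary by localizing its earlier machinery: it fixes $\hat \qq$ to the indicator of $s^*$ in the proofs of Theorem~\ref{thm:noisyrecov} and Corollary~\ref{cor:truerecnoise}, so that the $\Delta_{\zz\vv}$ separation forces the noisy estimate to coincide with the noise-free estimate, and the $2\|\ee_v\|_2$ separation forces the noise-free columns $(\AA\XX)_j$, $j \neq s^*$, to differ from the $s^*$th one, whence (by the logic of Corollary~\ref{cor:test}) the noise-free estimate is the true shift; both separation conditions are used, one per step. You instead argue directly in $\Ce^m$: using Lemma~\ref{lem:com} and $\AA\AA^* = \II$ you identify the columns of $\AA\tilde\XX$ with $\bar\DD^s\tilde\vv$ and note that $\bar\DD^s$ is unitary, so the test selects the column nearest $\tilde\zz$, at distance equal to the square-root term in $\Delta_{\zz\vv}$; since $\bar\DD^l\vv = \zz$, the true-shift column lies within $\|\ee_z\|_2 + \|\ee_v\|_2$ of $\tilde\zz$; a single triangle inequality then places the $s^*$ and $l$ columns within $\Delta_{\zz\vv}$ of each other, and the separation hypothesis forces $s^* = l$. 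This self-contained argument yields a slightly stronger statement: only the $\Delta_{\zz\vv}$ separation is needed, the $2\|\ee_v\|_2$ condition being redundant for the stated conclusion (and not implied by the other, since $\Delta_{\zz\vv} \geq 2\|\ee_v\|_2$ can fail). What the paper's two-step route buys instead is the intermediate conclusion --- that the noisy and noise-free tests agree --- which your argument bypasses entirely. Two cosmetic points: the quantity under the square root in $\Delta_{\zz\vv}$ equals $R^2$, not $R$; and the identification $(\AA\tilde\XX)_s = \bar\DD^s\tilde\vv$ itself requires the commutation-plus-$\AA\AA^* = \II$ computation you use elsewhere (it does hold: $\AA\DD^s\AA^*\AA\tilde\xx = \AA\AA^*\AA\DD^s\tilde\xx = \AA\DD^s\tilde\xx$), so it is worth stating explicitly rather than treating as notation.
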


%%%%%%%%%%%%%%%%%%%%%%%%%%%%%%%%%%%%%%%%%%%%%%%%%%%%%% 

%\section{Design}\label{sec:design}
%\TODO{To be written}
%%%%%%%%%%%%%%%%%%%%%%%%%%%%%%%%%%%%%%%%%%%
%%%%%%%%%%% 
%%%%%%%%%%%%%%%%%%%%%%%%%%%%%%%%%%%%%%%%%%%%%%%%%%%%%% 

\section{Conclusion}\label{sec:con}
To recover the cyclic shift relating two 1D signals,
the cross-correlation is usually evaluated for all possible shifts. Recent advances in hardware, signal acquisition and signal processing have made it possible to
sample or compute Fourier coefficients of a signal efficiently. It
is therefore of particular interest to see under what conditions the true shift
can be recovered from the Fourier coefficients. We have proposed a criterion that is computationally more efficient
than using the time samples, and we have shown that the true shift can be
recovered using as few as one Fourier coefficient. We have also derived
bounds for perfect recovery for both noise free and noisy
measurements. 
%Lastly, we have discussed how to design the Fourier sampling and which
%Fourier coefficients shall be chosen to maximize the likelihood of recovering
%the true shift.

%\section*{Acknowledgment}

%Ohlsson is partially supported by the Swedish Research
 % Council in the Linnaeus center CADICS, the European Research Council
 %  under the advanced grant LEARN, contract 267381, by  a postdoctoral grant from the Sweden-America
  % Foundation, donated by ASEA's Fellowship Fund, and by a postdoctoral
  % grant from the Swedish Research Council. Eldar is supported in part
  % by the Israel Science Foundation under Grant no. 170/10, and by the
  % Ollendorf Foundation. Yang is supported in part by ARO 63092-MA-II, DARPA FA8650-11-1-7153 and ONR N000%14-13-1-0341.

\appendices
\label{app:first}

\section{Proofs: Noise-Free Compressive Shift Retrieval}

Before proving the Theorem~\ref{thm:first}, we state two lemmas.
%%%%%%%%%%%%%%%%%%%%%%%%%%%%%%%%%%%%%%%%%%%%%%%%%%%%%%%
\begin{lem}[\bf Recovery of Shift using Projections]\label{lem:first}
Let $\XX$ be the $n \times n$-matrix made up of cyclically shifted versions
of $\xx$ as columns. If the columns of $\AA \XX$ are distinct, then the true shift can be recovered by
 \begin{equation}\label{eq:projected2}
\min_{\qq\in \{0,1\}^n}\| \AA \yy - \AA \XX \qq\|_2^2\quad  \text{s.t.} \quad \|\qq\|_0=1.
\end{equation}
\end{lem}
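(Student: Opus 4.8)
The plan is to reduce the Boolean optimization in \eqref{eq:projected2} to a finite comparison among the columns of $\AA\XX$ and then let the distinctness hypothesis do all the work. First I would unpack the feasible set: the constraint $\qq\in\{0,1\}^n$ together with $\|\qq\|_0=1$ forces $\qq$ to be one of the standard unit vectors $\ee_1,\dots,\ee_n$. Consequently $\AA\XX\qq$ ranges exactly over the $n$ columns of $\AA\XX$, the $i$th of which is $\AA\DD^i\xx$. Thus the problem is nothing more than selecting the column of $\AA\XX$ that is closest, in $\ell_2$, to the measurement $\AA\yy$.

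Next I would identify $\AA\yy$ with one specific column. Since $\yy=\DD^l\xx$ for the true shift $l$, and since the cyclic identification $\DD^n=\II$ lets the index $n$ represent shift $0$, there is a unique column index $i^\star\in\{1,\dots,n\}$ with $\AA\DD^{i^\star}\xx=\AA\yy$; here the standing assumption that the shift is unique up to a multiple of $n$ is what makes $i^\star$ well defined. For that index the objective $\|\AA\yy-\AA\XX\ee_{i^\star}\|_2^2$ vanishes, so the minimum value of \eqref{eq:projected2} is $0$, attained at $\qq=\ee_{i^\star}$.

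Finally I would invoke distinctness to pin down the minimizer. For every $i\neq i^\star$ the column $\AA\DD^i\xx$ differs from $\AA\DD^{i^\star}\xx=\AA\yy$ by the hypothesis that all columns of $\AA\XX$ are distinct, hence $\|\AA\yy-\AA\XX\ee_i\|_2^2>0$. Therefore $\qq=\ee_{i^\star}$ is the \emph{unique} minimizer, and reading off its support recovers the true shift. There is no genuine analytic obstacle here: the entire content lies in correctly matching column indices to shifts and in observing that the distinctness assumption is precisely what excludes a spurious tie at the optimum. The only point deserving care is the bookkeeping between the shift $l$ and the column index $i^\star$ through the relation $\DD^n=\II$, which I would state explicitly so that the recovered support translates back to the shift without ambiguity.
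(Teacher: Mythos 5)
Your proof is correct and is essentially the paper's argument: both rest on the observation that the true shift yields a zero objective value, and that distinctness of the columns of $\AA\XX$ rules out any other support achieving zero. The only cosmetic difference is that the paper phrases this as a contradiction (two distinct minimizers with zero loss would force $\AA\XX(\tilde\qq-\qq)=\zero$, i.e.\ two identical columns), whereas you argue directly that every competing column gives strictly positive loss; the mathematical content is identical.
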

%%%%%%%%%%%%%%%%%%%%%%%%%%%%%%%%%%%%%%%%%%%%%%%%%%%%%%%
\begin{proof}[\bf Proof of Lemma \ref{lem:first}]
Since the shift relating $\xx$ and $\yy$ is assumed unique, it is
clear that the true shift is recovered by
\begin{equation}\label{eq:nonprojected2}
 \min_{\qq\in \{0,1\}^n}\|  \yy -  \XX \qq \|_2^2 \quad \text{s.t.} \quad \|\qq\|_0=1.
 \end{equation}
 Assume
that the solution of \eqref{eq:projected2} is not equivalent to that of
\eqref{eq:nonprojected2}. Namely, assume that \eqref{eq:nonprojected2}
gives $ \qq$, \eqref{eq:projected2} gives $\tilde \qq$ and $\qq\neq \tilde
\qq$. Since $ \qq$ will give a zero objective value in 
\eqref{eq:projected2}, so must $\tilde \qq$. We therefore have that
$\AA\yy=\AA \XX \tilde \qq=\AA \XX \qq$ and hence \begin{equation} 
\AA \XX \tilde \qq-\AA\XX \qq=\AA \XX (\tilde \qq- \qq) =0.
\end{equation}
Since $ \qq,\,\tilde \qq\in \{0,1\}^n$, $\|\tilde \qq\|_0=\|\qq\|_0=1$, and $\qq\neq \tilde
\qq$, $\AA \XX (\tilde \qq- \qq) =0$ implies that two
columns of $\AA \XX$ are identical. This is a contradiction and we
therefore conclude
that both \eqref{eq:projected2} and
\eqref{eq:nonprojected2}   recover the true shift. 
\end{proof}
%%%%%%%%%%%%%%%%%%%%%%%%%%%%%%%%%%%%%%%%%%%%%%%%%%%%%%%
\begin{lem}[\bf From \eqref{eq:projected2} to \eqref{eq:prodtest4}]\label{lem:two}
Under conditions 1) and 2) of Theorem \ref{thm:first},
the shifts recovered by \eqref{eq:projected2} 
and
\eqref{eq:prodtest4}
are the same.
\end{lem}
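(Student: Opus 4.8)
The plan is to reduce both problems to optimizations over the scalar shift index $s$ and to show that their objectives agree up to an $s$-independent constant and a positive multiplicative factor, so that the minimizer of one is the maximizer of the other. First I would note that a feasible point $\qq\in\{0,1\}^n$ with $\|\qq\|_0=1$ whose nonzero entry sits in position $s$ selects the $s$th column of $\AA\XX$, namely $\AA\DD^s\xx$; since $\zz=\AA\yy$, problem \eqref{eq:projected2} is exactly $\min_s\|\zz-\AA\DD^s\xx\|_2^2$. Expanding the square,
\[\|\zz-\AA\DD^s\xx\|_2^2=\|\zz\|_2^2-2\Re\{\langle\zz,\AA\DD^s\xx\rangle\}+\|\AA\DD^s\xx\|_2^2,\]
so it remains to analyze the last two terms as functions of $s$.

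Next I would dispose of the quadratic term using conditions 1) and 2). Writing $\|\AA\DD^s\xx\|_2^2=\xx^*(\DD^s)^*\AA^*\AA\DD^s\xx$ and invoking condition 1), $\AA^*\AA\DD^s=\DD^s\AA^*\AA$, lets me commute $\AA^*\AA$ to the left; since $\DD^s$ is a cyclic permutation with $(\DD^s)^*\DD^s=\II$, this collapses to $\xx^*\AA^*\AA\xx=\|\vv\|_2^2$. Hence $\|\AA\DD^s\xx\|_2^2$ does not depend on $s$ and may be absorbed into a constant.

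For the cross term I would show $\bar\DD^s\vv=\frac{1}{\alpha}\AA\DD^s\xx$. Indeed, since $\vv=\AA\xx$ we have $\bar\DD^s\vv=\AA\DD^s\AA^*\AA\xx$; condition 1) moves $\AA^*\AA$ past $\DD^s$ to give $\AA\AA^*\AA\DD^s\xx$, and condition 2), $\AA\AA^*=\frac{1}{\alpha}\II$, yields $\frac{1}{\alpha}\AA\DD^s\xx$. Therefore $\Re\{\langle\zz,\AA\DD^s\xx\rangle\}=\alpha\,\Re\{\langle\zz,\bar\DD^s\vv\rangle\}$, and combining with the previous paragraph,
\[\|\zz-\AA\DD^s\xx\|_2^2=\|\zz\|_2^2+\|\vv\|_2^2-2\alpha\,\Re\{\langle\zz,\bar\DD^s\vv\rangle\}.\]

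Finally, since $\AA\AA^*$ is Hermitian positive semidefinite and equals $\frac{1}{\alpha}\II$, the scalar $\alpha$ must be strictly positive; the $s$-independent constant together with the factor $2\alpha>0$ then make minimizing the left-hand side over $s$ equivalent to maximizing $\Re\{\langle\zz,\bar\DD^s\vv\rangle\}$, namely problem \eqref{eq:prodtest4}, with identical optimizers. The step I expect to require the most care is the sign bookkeeping: the whole equivalence hinges on $\alpha>0$, without which minimization and maximization would not correspond, so I would justify positivity of $\alpha$ explicitly rather than treat it as obvious. Everything else is routine linear algebra driven by the two commutation and normalization hypotheses.
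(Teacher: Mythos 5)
Your proof is correct and follows essentially the same route as the paper's: expand the squared error, use condition 1) together with $(\DD^s)^*\DD^s=\II$ to make the quadratic term $\|\AA\DD^s\xx\|_2^2$ independent of $s$, and use conditions 1) and 2) to identify the cross term with $\alpha\,\Re\{\langle\zz,\bar\DD^s\vv\rangle\}$ (the paper inserts $\alpha\AA\AA^*=\II$ inside the bilinear form and commutes, whereas you establish $\bar\DD^s\vv=\tfrac{1}{\alpha}\AA\DD^s\xx$ at the vector level, but these are the same manipulation). Your explicit check that $\alpha>0$, via positive semidefiniteness of $\AA\AA^*$, addresses a point the paper leaves implicit and which is genuinely needed for the minimization/maximization correspondence.
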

%%%%%%%%%%%%%%%%%%%%%%%%%%%%%%%%%%%%%%%%%%%%%%%%%%%%%%
\begin{proof}[\bf Proof of Lemma \ref{lem:two}]
Consider the objective of  \eqref{eq:projected2}: 
\begin{align}\nonumber
\| \AA \yy - \AA \XX \qq\|_2^2=&(\AA \yy)^* \AA \yy +( \AA \XX \qq)^*
 \AA \XX \qq  \\ \label{eq:projected3} -&  (\AA \yy)^* \AA \XX \qq -(\AA \XX \qq)^*\AA \yy.
\end{align}
Writing $\XX\qq = \DD^s \xx$,  problem \eqref{eq:projected2} is
equal to
\begin{equation}\label{eq:prodtest3}
 \max_{s} 2\Re\{(\AA \yy)^* \AA  \DD^s \xx\} 
-( \AA  \DD^s \xx)^*
 \AA  \DD^s \xx. 
 \end{equation}
Now, if $\AA^* \AA \DD^s = \DD^s  \AA^* \AA $
and using that $( \DD^s)^* \DD^s=\II$ for a shift matrix, then 
\begin{align}
( \AA  \DD^s \xx)^*  \AA \DD^s \xx   =  \xx^*  (\DD^s)^* \AA^* \AA  
   \DD^s \xx=\|\AA \xx\|_2^2,
\end{align}
which is independent of $s$. Therefore, the shift recovered by
\eqref{eq:prodtest3} is the same as that of
\begin{equation}\label{eq:prodtest3222}
 \max_{s}\Re \{ (\AA \yy)^* \AA  \DD^s \xx\}. 
 \end{equation}
Lastly, if we again use  that $\AA^* \AA \DD^s = \DD^s  \AA^* \AA$ and
 $\alpha \AA \AA^*=\II$, then \eqref{eq:prodtest4} follows from 
\begin{align} 
\Re \{ (\AA \yy)^* \AA  \DD^s \xx\} &=\Re \{  \yy^*  \AA^* \AA  \DD^s \xx\} \\ &=
\alpha\Re \{  \yy^*  \AA^* \AA \AA^* \AA  \DD^s \xx\}
\\ &=
\alpha\Re \{  \yy^*  \AA^* \AA \DD^s  \AA^* \AA  \xx\} 
\\ &=
\alpha\Re \{ \langle  \zz, \bar \DD^s   \vv \rangle\} 
\end{align} where $\zz=\AA\yy$ and $\vv=\AA\xx$.
\end{proof}
%%%%%%%%%%%%%%%%%%%%%%%%%%%%%%%%%%%%%%%%%%%%%%%%%%%%%%%
We are now ready to prove Theorem \ref{thm:first}.
\begin{proof}[\bf Proof of Theorem \ref{thm:first}]
The assumptions of Theorem~\ref{thm:first} imply that requirements of both  Lemmas~\ref{lem:first}
and \ref{lem:two} are satisfied. The theorem therefore follows trivially. 
\end{proof}
%%%%%%%%%%%%%%%%%%%%%%%%%%%%%%%%%%%%%%%%%%%%%%%%%%%%%%%
We next prove Corollary \ref{cor:test}.
%%%%%%%%%%%%%%%%%%%%%%%%%%%%%%%%%%%%%%%%%%%%%%%%%%%%%%%
\begin{proof}[\bf Proof of Corollary \ref{cor:test}]
% Since $\yy= \DD^l \xx$
% \begin{align}
% 0=&\|\AA\yy-\AA \DD^l \xx \|_2^2 \\ =&\|\AA\yy\|_2^2+ \|\AA \DD^l \xx
% \|_2^2 -\yy^*\AA^*\AA \DD^l \xx - \xx ^* (\DD^l)^* \AA^*     \AA \yy
% \\=&2\|\zz\|_2^2-2\Re \{ \langle \zz,  \bar \DD^s  \vv\rangle \} 
% \end{align}
% The true shift $l$ gives the objective value $\|\zz\|^2 $. No other shift can give a smaller objective value. Now, if
% there is only one value for $s$ giving the objective value
% $\|\zz\|^2 $ then this $s$ must necessary be equal to $l$. 
In the proof of Lemma \ref{lem:first}, $\AA \XX (\tilde \qq- \qq) =0$
leads to $\tilde \qq- \qq =0$ if the columns of $\AA \XX $ were all
distinct. Now, if 
\begin{equation}\label{eq:prodtest32222}
 s^*=\argmax_{s}\Re \{ \langle  \zz, \bar \DD^s   \vv \rangle\},  
 \end{equation}
that corresponds to the $s^*$th   element of $\tilde \qq$ being one
and all other elements zero. Hence, Lemma \ref{lem:first} can be made less
conservative if $s^*$ is known by requiring that only the $s^*$th   column of  $\AA
\XX$ is different than all other columns.
\end{proof}
%%%%%%%%%%%%%%%%%%%%%%%%%%%%%%%%%%%%%%%%%%%%%%%%%%%%%%%%%
\begin{proof}[\bf Proof of Lemma \ref{lem:com}]
Let $\MM = \AA \DD^s$ and $\QQ = \AA (\DD^s)^*$. By the definition of $\DD^s$, $\MM$ is a column
permutation of $\AA$ where the columns are shifted $s$ times to the
right. Thus, the $r$th column of $\MM$ is equal to the $t$th column of
$\AA$ where $t=(r-s) \text{ mod } n$. It is also easy to see that
$(\DD^s)^*$ permutes the columns of $\AA$ by $s$ to the left so that the
$r$th column of $\QQ$ is equal to the $q$th column of $\AA$ where $q=(r+s)
\text{ mod } n$. Now, the $pr$th element of $\AA^* \MM = \AA^* \AA\DD^s$ is given by
\begin{equation}
(\AA_{:,p})^*  \MM_{:,r} = (\AA_{:,p})^* \AA_{:,r-s} = \frac{1}{n} \sum_{i=1}^m e^{2j\pi k_i (p-r+s)},
\end{equation}
where $\AA_{:,p}$ is used to denote the $p$th column of $\AA$ and $  \MM_{:,r}
$ the $r$th column or $\MM$. On the other hand, the $(p,r)$-th element of $\QQ^* \AA = \DD^s \AA^* \AA$ is given by
\begin{equation}
(\QQ_{:,p})^*  \AA_{:,r} = (\AA_{:,p+s})^* \AA_{:,r} = \frac{1}{n} \sum_{i=1}^m e^{2j\pi k_i (p+s-r) }.
\end{equation}
Clearly, the two are equivalent.
\end{proof}
%%%%%%%%%%%%%%%%%%%%%%%%%%%%%%%%%%%%%%%%%%%%%%%%%%%%%%%
We are now ready to prove Corollary \ref{cor:first}.
\begin{proof}[\bf Proof of Corollary \ref{cor:first}]
Lemma \ref{lem:com} gives that Condition 1) of Theorem \ref{thm:first}  is
satisfied. Since a full Fourier matrix is orthonormal, a matrix made
up of a selection of
rows of a Fourier matrix satisfies Condition 2). 
The last condition of Theorem \ref{thm:first} requires columns of
$\AA\XX$ to be distinct. A sufficient condition is that there exists a
row with all distinct elements.  
%Considering the $i$th row of the
%matrix $\AA\XX$. 
As shown previously, the $(p,r)$-th element of $\AA\XX$ is $X_{k_p}
e^{\frac{2 j \pi k_p (r-1)}{n} }.$ If $X_{k_p}$ is assumed nonzero,
a sufficient condition for $\AA\XX$ to have distinct columns is that 
 $e^{\frac{2 j \pi k_p r_1}{n} } \neq e^{\frac{2 j \pi k_p
     r_2}{n} },\, r_1,r_2 \in \{ 0,\dots,n-1\}, r_1\neq r_2$.
 This condition can be simplified to $\frac{ k_p r_1}{n}
 \neq \frac{ k_p  r_2}{n}  + \gamma,\,\gamma\in \Z$. By realizing that $r_1-r_2$ takes values in $\{
 -n+1,\dots,-1,1,\dots,n-1\}$ we get that the condition is equivalent
 to requiring that there is no integers in $\{
 -n+1,\dots,-1,1,\dots,n-1\}\frac{k_p}{n}$.  Due to symmetry, a
 sufficient condition for distinct columns is that there exists a $p\in \{1,\dots,m \}$ such
that $X_{k_p}\neq 0$ and $\{1,\dots,n-1 \} \frac{k_p}{n}$ contains no integers. 
% A sufficient condition is that there exist a
% row with all distinct elements. Consider the $i$th row of  $\AA\XX$ and assume that $X_{k_i}
% \neq 0$. For all elements to be distinct, we need  all elements of
% the $i$th row of $\AA$ to be distinct, or 
% The last condition of
% Theorem \ref{thm:first} requires columns of $\AA\XX$ to be distinct and
% is also a requirement of  Corollary \ref{cor:first}.
% Hence, under assumptions of Corollary~\ref{cor:first}, the test
% \eqref{eq:prodtest4}
%  recovers the true shift. 
Lastly, if we write out $\AA  \DD^s\AA ^* $ we get that the $pr$th element is equal to  $\delta_{p,r} e^{ -\frac{2 j\pi
  k_p  s}{n}} /n$ and hence  the simplified  test proposed in \eqref{eq:simptest}. %{\bf AY: I cannot follow this long statement. hence what?}
\end{proof}

\section{Proofs: Noisy Compressive Shift Retrieval}

%%%%%%%%%%%%%%%%%%%%%%%%%%%%%%%%%%%%%%%%%%%%%%%%%%%%%%%
 \begin{proof}[\bf Proof of Theorem \ref{thm:noisyrecov}]
 From Lemma \ref{lem:two} we can see that seeking $s$ that maximizes
 $\Re \{ \langle \tilde \zz, \bar \DD^s \tilde \vv  \rangle \}$ is equivalent to
 seeking $\qq$ that solves 
 \begin{equation}\label{eq:l1crit}
 \min_{\qq\in \{0,1\}^n}\| \tilde  \zz - \AA  \tilde \XX\qq \|_2^2\quad  \text{s.t.} \quad \|\qq\|_0=1,
 \end{equation}
 where the first column of $\AA  \tilde \XX $ is equal to $\tilde \vv$ (which
 defines the first column of $\tilde \XX $) and  
the $i$th
 column of  $\tilde \XX $   a circular shift of the first column $i-1$
 steps. Assume that $\hat \qq$ solves \eqref{eq:l1crit}.  Since our measurements are noisy, we can not
 expect a zero loss. The loss can be shown given by
 \begin{equation}
 \| \tilde \zz -\AA  \tilde \XX\hat \qq\|_2^2 =\|\tilde \vv\|_2^2+\|\tilde \zz\|_2^2- \max_s2\Re \{\tilde \zz^* \bar
 \DD^s \tilde \vv \}.
 \end{equation}  
 Now, consider $\| \tilde  \zz - \AA  \tilde \XX\hat \qq \|_2$. Assume that $
 \qq_0$ solves the noise-free version of \eqref{eq:l1crit}  
and let $\tilde \XX =\XX +\HH$. We have the
 following inequality:
 \begin{align}
\nonumber 
 \| \tilde  \zz - \AA  \tilde \XX\hat \qq \|_2 &= \| \zz+\ee_z-\zz+\AA  \XX \qq_0 - \AA  \tilde \XX 
 \hat \qq \|_2 \\ \nonumber &= \| \ee_z+\AA  \XX \qq_0 - \AA  \tilde \XX 
 \hat \qq \|_2 \\ \nonumber  &= \| \ee_z+\AA  (\tilde \XX -\HH) \qq_0 - \AA  \tilde \XX 
 \hat \qq \|_2 \\ \nonumber &\geq  \|\AA  \tilde \XX \qq_0 - \AA  \tilde \XX 
 \hat \qq \|_2-\| \ee_z\|_2-\| \ee_v \|_2,
 \end{align}
where we used the fact that $\AA  \HH \qq_0=\ee_v$. Therefore
\begin{align} 
 \|\AA  \tilde \XX \qq_0 -& \AA  \tilde \XX 
\hat \qq \|_2  \leq \Delta_{\zz\vv}.
\end{align}
Since  $\|\hat \qq\|_0=\|\qq_0\|_0=1$, if the 
$\ell_2$ difference between any two columns of $\AA  \tilde \XX $ is greater
than $\Delta_{\vv\zz}$, then $\qq_0=\hat \qq$. 
\end{proof}

%%%%%%%%%%%%%%%%%%%%%%%%%%%%%%%%%%%%%%%%%%%%%%%%%%%%%%%

\begin{proof}[\bf Proof of Corollary \ref{cor:truerecnoise}]
Let $\tilde \qq$ and $\hat \qq$ be any vectors such that $\|\hat
\qq\|_0 =\|\tilde \qq\|_0 =1$, $\hat  \qq \neq \tilde  \qq$ and $\hat  \qq,\tilde \qq
\in \{0,1\}^n$. Using the triangle inequality we have that
\begin{align}
\|\AA \tilde \XX \hat  \qq& -\AA \tilde \XX \tilde \qq \|_2= \|\AA (\XX +\HH) (\hat  \qq - \tilde \qq) \|_2 \\
\leq &
\|\AA \XX (\hat  \qq - \tilde \qq) \|_2+\|\AA  \HH (\hat \qq - \tilde \qq) \|_2  \\ \leq & \|\AA \XX (\hat  \qq -\tilde \qq) \|_2+2\|\ee_v \|_2.
\end{align}
Hence, if $\|\AA  \tilde \XX \hat \qq -\AA \tilde \XX \tilde \qq \|_2-2\|\ee_v \|_2> 0$,
then $ \|\AA \XX (\hat  \qq - \tilde \qq) \|_2$ is greater than zero. Now
since Theorem \ref{thm:noisyrecov} gives that  \eqref{eq:simptest}
recovers the same shift as if the measurements would have been
noise-free, and since Theorem~\ref{thm:first} gives that the
noise-free estimate is equal to the true shift if $ \|\AA \XX (\hat
\qq - \tilde \qq) \|_2$ is greater than zero (or equivalent that all
columns of $ \AA \XX $ are distinct), we can guarantee the
recovery of the true shift also in the noisy case.  
\end{proof}

\begin{proof}[\bf Proof of Corollary \ref{corr:testnoise}]
The corollary follows trivially by setting the $s^*$th  element of
$\hat \qq$ to one
and all other elements zero in the proofs of Theorem
\ref{thm:noisyrecov} and Corollary \ref{cor:truerecnoise}. 
\end{proof}

% \TODO{What about sampling x Fourier coefficients at random, what is the
%   probability of recovering the true shift? Is this interesting?}

% \TODO{Could we estimate a linear model on the fly like in the match
%   filter? What kind of guarantees could be given?}

% \TODO{different $A$s}

% \TODO{off the grid}

% \TODO{What about Gaussian noise?}

\ifCLASSOPTIONcaptionsoff
  \newpage
\fi

\bibliographystyle{IEEEtran}
\bibliography{paper}
\end{document}